\newtheorem{theorem}{Theorem}[section]
 \newtheorem{lemma}[theorem]{Lemma}
 \newtheorem{corollary}[theorem]{Corollary}
 \newtheorem{definition}[theorem]{Definition}
 \newtheorem{proposition}[theorem]{Proposition}
\newtheorem{example}[theorem]{Example}
\newcommand*{\bbC}{\mathbb{C}}
\newcommand*{\complex}{\bbC}
\newcommand*{\cA}{\mathcal{A}}
\newcommand*{\cN}{\mathcal{N}}
\newcommand*{\cR}{\mathcal{R}}
\newcommand*{\cS}{\mathcal{S}}
\newcommand*{\id}{\openone}
\newcommand*{\tr}{\mathsf{tr}}
\newcommand*{\Sym}{\mathsf{Sym}}
\newcommand*{\ket}[1]{|#1\rangle}
\newcommand*{\bra}[1]{\langle #1|}
\newcommand*{\proj}[1]{\ket{#1}\bra{#1}}
\newcommand{\braket}[2]{\langle #1|#2\rangle}       
\newcommand*{\ot}{\otimes}
\newcommand*{\diag}{\mathsf{diag}}
\begin{document}

\title{The hierarchical structure of local unitary invariants.}
\author{Graeme \surname{Mitchison}} \email[]{gjm12@cam.ac.uk}
\affiliation{Centre for Quantum Information and Foundations, DAMTP,
University of Cambridge, Cambridge CB3 0WA, UK}

\begin{abstract}

  Local unitary invariants allow one to test whether multipartite
  states are equivalent up to local basis changes. Equivalently, they
  specify the geometry of the ``orbit space'' obtained by factoring
  out local unitary action from the state space. This space is of
  interest because of its intimate relationship to
  entanglement. Unfortunately, the dimension of the orbit space grows
  exponentially with the number of subsystems, and the number of
  invariants needed to characterise orbits grows at least as
  fast. This makes the study of entanglement via local unitary
  invariants seem very daunting. I point out here that there is a
  simplifying principle: Invariants fall into families related by the
  tracing-out of subsystems, and these families grow exponentially
  with the number of subsystems. In particular, in the case of pure
  qubit systems, there is a family whose size is about half the
  dimension of orbit space. These invariants are closely related to
  cumulants and to multipartite separability. Members of the family
  have been repeatedly discovered in the literature, but the fact that
  they are related to cumulants and constitute a family has apparently
  not been observed.
\end{abstract}

\maketitle

\section{Introduction}

Given a multipartite quantum state, the action of local unitaries maps
out an orbit of locally equivalent states. It would be very
interesting to understand the structure of the space of such
orbits. Polynomial invariants of the local unitary action provide a
way to do this. These are polynomials in the coefficients of a state
(and the complex conjugates of these coefficients) that are invariant
when the coefficients change under local unitary action.  It is known
that a finite set of invariants, a {\em fundamental set}, suffices to
generate all the polynomial invariants and to distinguish local
unitary orbits in state space \cite{Sturmfels,SudberyOpen,OV}. The
geometry of the orbit space is determined by the invariants in a
fundamental set and the algebraic relations between them. Indeed, the
ring of invariants can be viewed as the coordinate ring of the orbit
space, whose geometrical structure is therefore specified by the
algebraic relations amongst invariants.

Fundamental sets of invariants have been determined for pure states of
two and three qubits
\cite{LindenPopescu98,Sudbery01,Grassl98,Acinetal00,SudberyOpen}, and
for two-qubit mixed states \cite{Grassl98,KingWelsh}. Finding
fundamental sets is generally a hard problem. One reason for this is
that the dimension $D(n)$ of the orbit space grows exponentially with
the number, $n$, of systems, and the size of a fundamental set is
generally far larger. I point out here that there is a redeeming
feature: an invariant in $n$ systems gives rise to a set of invariants
in $n+1$ systems via a tracing-out operation. This means that any
invariant generates a family that grows exponentially with the number
of systems. So, given an invariant, one gets an exponential family of
them for free. I illustrate this by defining a family of
invariants for pure $n$-qubit states that grows exponentially with $n$
and asymptotically has $\frac{1}{2}D(n)$ members. Moreover, for any
$n$, the members of this family are algebraically independent, which
means that they are useful candidates for building a fundamental set.

The key to the construction of this family is the set of joint
cumulants
\cite{Thiele1,Thiele2,Fisher1,Fisher2,KendallStuart77,Royer}. Cumulants
are most commonly encountered as statistical tools, or as ingredients
in cluster expansions. For our purposes, they are simply polynomials
in the coefficients of states that have certain desirable properties;
for instance, they are closely related to the separability of
multiparty states. Cumulants can be introduced in an attractive if
unorthodox way by giving the space of vectors in
$(\complex^d)^{\otimes n}$ the structure of an algebra in which a
Taylor series and hence analystic functions can be defined. In
particular, one can define a log function with the property that
\[
\log (\ket{\psi} \otimes \ket{\phi})= \log (\ket{\psi})+ \log (\ket{\phi}),
\]
for all $\ket{\psi}$ and $\ket{\phi}$; see (\ref{tprod}) and
(\ref{linearise}) for a more precise statement. The coefficients in the
log-expansion are cumulants, which is how the connection between
cumulants and separability comes about.

This algebra is a rather unnatural construction, since it depends on a
particular choice of basis. However, twirling with respect to local
unitaries allows one to remove this artificiality and to generate a
set of invariants. It turns out that these cumulant-based invariants
are already known in the literature, though their relationship to
cumulants is apparently not recognised. They account for five of the
six independent 3-qubit invariants and eleven of the nineteen 4-qubit
invariants given in \cite{Luque07}. Asymptotically, there are $O(2^n)$
of them whereas $D(n)$ is $O(2^{n+1})$. There are also other families
of invariants, including a family of 4th degree polynomials that is of
size $O(2^{n-1})$. Breaking down the invariants into (generally
overlapping) hierarchical families gives a perspective on their
structure and on the entanglement of states.

The orbit space can be explored in other ways. Kraus
\cite{Kraus09,Kraus10} showed how to reduce multipartite qubit states
to a standard form that is invariant under local unitaries. An
alternative, geometric procedure that applies to local spaces of
arbitrary dimension (not just qubits) has also been proposed
\cite{Sawicki10}. These methods enable one to determine if two states
belong to the same orbit, and give insights that complement those
obtained from polynomial invariants. Finally, there is alternative way
of deriving invariants from cumulants (see Section
\ref{zhou-section}), due to Zhou et al \cite{Zhouetal06}. Despite
certain formal similarities, these seem not to have a simple
functional relationship to our invariants.

\section{The algebra of multi-partite states}\label{algebra}

Let $\cA_n^{(d)}$ be the the commutative algebra over $\complex$ with
generators $e_i$, $i=1, \ldots, n$ satisfying $e_i^d=0$. An element
$\psi$ of $\cA_n^{(d)}$ has the form
\[
\psi=\sum a_{i_1 \ldots i_n}e^{i_1}\ldots e^{i_n},
\]
where $0 \le i_k \le d-1$ and the $a$'s are complex coefficients. Then
$\psi=a+r$, where $a=a_{0 \ldots 0}$, and $r$ consists of at most
$d^n-1$ terms and satisfies $r^{n(d-1)+1}=0$. Any analytic function
$f:\cA_n^{(d)} \to \cA_n^{(d)}$ can be expanded in a Taylor series in r:
\begin{align}\label{taylor}
  f(\psi)=f(a)+f'(a)r+f''(a)r^2/2!+... 
\end{align}
This series is finite, because of the nilpotency of $r$, and is thus
well defined. For instance, in $\cA_n^{(2)}$ any $\psi$ can be written
\begin{align}
\psi=a_{00}+a_{10}e_1+a_{01}e_2+a_{11}e_1e_2
\end{align}
and if $a_{00} \ne 0$
\begin{align}
\log \psi&=\log (a_{00}+r)=\log a_{00}+\log(1+r/a_{00})=\log a_{00}+r/a_{00}-r^2/(2a_{00}^2), \nonumber\\
&=\log a_{00}+(a_{10}/a_{00})e_1+(a_{01}/a_{00})e_2+(a_{11}/a_{00}-a_{10}a_{01}/a_{00}^2)e_1e_2.\label{log}
\end{align}
Similarly, there is a finite polynomial for the algebra inverse; e.g.
\begin{align}
\psi^{-1}= a_{00}^{-1}-(a_{10}/a_{00}^2)e_1-(a_{01}/a_{00}^2)e_2-(a_{11}/a_{00}^2-2a_{10}a_{01}/a_{00}^3)e_1e_2;\label{inv}
\end{align}
and for other functions, such as $\exp$. These functions have all the
expected properties, e.g. 
\begin{align}
\psi \psi^{-1}&=1,\\
\label{logsum2} \log(\psi\phi)&=\log(\psi)+\log(\phi),\\ 
\label{expsum}\exp(\psi+\phi)&=\exp(\psi)\exp(\phi),\\
\exp(\log)(\psi)&=\psi.
\end{align}

Now identify $e_1^{i_1} \ldots e_n^{i_n}$ with the $n$-qudit basis
element $\ket{i_1 \ldots i_n}$. This sets up an isomorphism between
elements $\psi$ of $\cA_n^{(d)}$ and unnormalised $n$-qudit states
$\ket{\psi}$. For instance,
$\psi=a_{00}+a_{10}e_1+a_{01}e_2+a_{11}e_1e_2$ in $\cA_n^{(2)}$ can be
identified with the two qubit state
$\ket{\psi}=a_{00}\ket{00}+a_{10}\ket{10}+a_{01}\ket{01}+a_{11}\ket{11}$.
One can then carry across the structure of the algebra. For instance,
if
$\ket{\phi}=b_{00}\ket{00}+b_{10}\ket{10}+b_{01}\ket{01}+b_{11}\ket{11}$
we have the product
\begin{align*}
  \ket{\psi}\ket{\phi}&=a_{00}b_{00}\ket{00}+(a_{00}b_{10}+a_{10}b_{00})\ket{10}+(a_{00}b_{01}+a_{01}b_{00})\ket{01}\\
  &+(a_{00}b_{11}+a_{10}b_{01}+a_{01}b_{10}+a_{11}b_{00})\ket{11}.
\end{align*}
The identity element in $\cA_n^{(d)}$, is $\ket{\underbrace{0 \ldots
    0}_n}$, and the inverse, log and exponential are carried over in the
obvious way from the corresponding functions in $\cA_n^{(d)}$.

Suppose the $n$ subsystems are divided into two sets $S$ and $T$. We
write 
\begin{align}
\label{tprod}\ket{\psi}=\ket{\psi}_S \otimes \ket{\psi}_T,
\end{align}
to indicate that $\ket{\psi}$ is separable with respect to these
subsets, the order in which $S$ and $T$ appear in the tensor product
not being necessarily related to the order of their indices (e.g. we
might write $\ket{\psi}=\ket{\psi}_{13} \otimes \ket{\psi}_2$). Then
$\ket{\psi}$ can also be written in terms of the algebra product as
$\ket{\psi}=\ket{\psi}_S \ket{\psi}_T$, where $\ket{\psi}_S$ is
identified with an element of the algebra that only uses those $e_i$
with $i \in S$, and $\ket{\psi}_T$ using only those $e_i$ with $i \in
T$. From (\ref{logsum2}) we have
\begin{align}
\label{linearise} \log \ket{\psi}=\log \ket{\psi}_S+\log \ket{\psi}_T.
\end{align}
thus linearising the tensor product when the $\log$ is defined,
i.e. when the constant coefficients in the algebra do not vanish.

The coefficients $c_{i_1 \ldots i_n}$ of $\log \ket{\phi}=\sum c_{i_1
  \ldots i_n} \ket{i_1 \ldots i_n}$ are called {\em cumulants}. The
cumulant corresponding to $c_{i_1 \ldots i_n}$ is defined for
classical random variables $X_i$ as the coefficient of
$\lambda_1^{i_1}..\lambda_n^{i_n}$ in $\log\langle
e^{\sum_i\lambda_iX_i}\rangle$ \cite{Royer}. This follows by
identifying $\langle X_1^{i_1} \ldots X_n^{i_n}\rangle$ with $a_{i_1
  \ldots i_n}$ and taking $a_{0 \ldots 0}=1$. For instance, the
equivalent of
\begin{align}\label{c11}
c_{11}=(a_{11}a_{00}-a_{10}a_{01})/a_{00}^2,
\end{align}
which is the coefficient of $e_1e_2$ in (\ref{log}), is the classical
second degree cumulant $\langle X_1X_2 \rangle-\langle X_1\rangle
\langle X_2 \rangle$. The algebra $\cA_n^{(2)}$ can also be identified
with the ``moment algebra'' in \cite{AbergMitchison09} by associating
to the term $ce_{i_1} \ldots e_{i_n}$ in $\cA_n^{(2)}$ the map that
assigns to the integers $\{i_1, \ldots i_n\}$ the value $c$.

\section{Multipartite separability}

Write the state space for an $n$-party state as $(\complex^d)_1
\otimes \ldots \otimes (\complex^d)_n$. If $\pi=\{\pi_1,\pi_2,\ldots,
\pi_k\}$ is a partition of $n$, we say $\ket{\phi}$ is $\pi$-separable
if we can write
\begin{align}
\label{tensorprod}
\ket{\phi}=\bigotimes_{i=1}^k \ket{\phi}_{\pi_i},
\end{align} 
where each $\ket{\phi}_{\pi_i}$ is a state in the subspace $\bigotimes_{j
  \in \pi_i} (\complex^d)_j$. As we have seen, we can also write this
using the algebra product as
\begin{align}
\label{algebraprod} \ket{\phi}=\ket{ \phi}_{\pi_1} \ldots \ket{ \phi}_{\pi_k}.
\end{align}
From (\ref{logsum2}) we get
\begin{align}
\label{logsum} \log \ket{\phi}=\log\ket{ \phi}_{\pi_1}+ \ldots +\log\ket{ \phi}_{\pi_k}.
\end{align}
This immediately gives a characterisation of multipartite
separability. Let us say that a set of indices ${i_1 \ldots i_n}$ {\em
  splits} the partition $\pi$ if there are non-zero indices $i_j$ in
more than one subset of $\pi$.
\begin{theorem}\label{separability}
  An $n$-party pure state $\ket{\phi}$ with $a_{0 \ldots 0} \ne 0$ is
  $\pi$-separable if and only if $c_{i_1 \ldots i_n}=0$ whenever ${i_1
    \ldots i_n}$ splits $\pi$.
\end{theorem}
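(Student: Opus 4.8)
The plan is to read off both directions from the two structural identities already available: additivity of $\log$ over algebra products, (\ref{logsum}), and its inverse relation to $\exp$, (\ref{expsum}). It is convenient to write $\cA_{\pi_l}$ for the subalgebra of $\cA_n^{(d)}$ generated by the $e_j$ with $j \in \pi_l$; under the state correspondence of Section \ref{algebra}, the elements of $\cA_{\pi_l}$ are exactly the unnormalised states whose nonzero amplitudes sit on basis vectors $\ket{i_1 \ldots i_n}$ with $i_j = 0$ for all $j \notin \pi_l$, i.e.\ states of block $\pi_l$ tensored with $\ket{0 \ldots 0}$ on the remaining systems.

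For the ``only if'' direction I would start from a $\pi$-separable $\ket{\phi} = \ket{\phi}_{\pi_1} \ldots \ket{\phi}_{\pi_k}$, each factor in $\cA_{\pi_l}$. Since $a_{0 \ldots 0}$ is the product of the constant terms of the factors, none of those vanish, so each $\log \ket{\phi}_{\pi_l}$ is defined and again lies in $\cA_{\pi_l}$, and (\ref{logsum}) gives $\log \ket{\phi} = \sum_l \log \ket{\phi}_{\pi_l}$. Because the expansion of $\log \ket{\phi}_{\pi_l}$ involves only the $e_j$ with $j \in \pi_l$, its nonzero coefficients sit on multi-indices supported within $\pi_l$; a multi-index that splits $\pi$ is supported within no single block, so it picks up $0$ from every summand, whence $c_{i_1 \ldots i_n} = 0$ there.

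For the ``if'' direction I would run this in reverse. Put $L = \log \ket{\phi}$, which is defined since $a_{0 \ldots 0} \ne 0$. By hypothesis every nonzero term of $L$ other than the constant $c_{0 \ldots 0}$ has nonempty support inside exactly one block of $\pi$; grouping the terms by that block writes $L = c_{0 \ldots 0} + \sum_{l=1}^k L_l$ with each $L_l \in \cA_{\pi_l}$ nilpotent and with zero constant term. Choosing scalars $\mu_l$ with $\sum_l \mu_l = c_{0 \ldots 0}$ and setting $\ket{\phi}_{\pi_l} := \exp(\mu_l + L_l)$ — an element of $\cA_{\pi_l}$, since $\exp$ of an element with nilpotent non-constant part is a finite sum of its powers — commutativity of the algebra and (\ref{expsum}) give, as $\sum_l (\mu_l + L_l) = L$,
\begin{align*}
  \ket{\phi}_{\pi_1} \ldots \ket{\phi}_{\pi_k} = \exp\Bigl( \sum_l (\mu_l + L_l) \Bigr) = \exp(L) = \exp(\log \ket{\phi}) = \ket{\phi},
\end{align*}
which exhibits the required $\pi$-separable factorisation.

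The argument is short because (\ref{logsum}) and (\ref{expsum}) do the work; two spots want care. In the forward direction one must note that contributions to a split multi-index coming from different blocks cannot add up to something nonzero — each block's $\log$-term is supported strictly inside that block, so each contributes $0$ separately. In the reverse direction one must check that $\exp$ maps $\cA_{\pi_l}$ into itself, so that each reconstructed factor is a genuine state on block $\pi_l$, and observe that the constant $c_{0 \ldots 0}$ may be parcelled out among the factors arbitrarily. The only thing resembling an obstacle is making sure the splitting of the non-constant part of $L$ into the pieces $L_l$ is well defined, and that is exactly what the hypothesis guarantees.
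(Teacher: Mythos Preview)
Your proof is correct and follows essentially the same route as the paper's: necessity via the additivity (\ref{logsum}) of $\log$ over the algebra product, sufficiency by grouping the non-splitting terms of $\log\ket{\phi}$ by block and exponentiating via (\ref{expsum}). The only difference is cosmetic: you handle the constant term $c_{0\ldots 0}$ explicitly by distributing it as $\sum_l \mu_l$, whereas the paper's displayed block decomposition of $\log\ket{\phi}$ tacitly over-counts the all-zero index and leaves this bookkeeping to the reader.
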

\begin{proof}
  Necessity follows from the fact that the cumulants with indices
  splitting $\pi$ are absent from the expansion (\ref{logsum}) of
  $\log \ket{\phi}$. Sufficiency follows by noting that, if these
  cumulants are zero, we can write $\log \ket{\phi}$ in the form
\begin{align*}
  \log \ket{\phi}=\sum_k \sum_{\{i_j=1 \ \Longrightarrow \ j \in \pi_k\}}c_{i_1i_2 \ldots i_n}\ket{i_1i_2 \ldots i_n}
\end{align*}
and exponentiating this shows $\ket{\phi}$ to be $\pi$-separable.
\end{proof}
The condition $a_{0 \dots 0} \ne 0$ reflects the special role played
by $\ket{0 \dots 0}$ as the identity in the algebra. We shall shortly
give a version of this theorem (\ref{separability-criterion})
which does not have this unpleasant restriction.

It should be emphasised that one can easily write down algebraic
conditions for a pure state to be $\pi$-separable. However, the
characterisation of Theorem \ref{separability} will turn out to
provide a useful starting point for making qubit invariants. It is
also economical, in the sense that the vanishing of the $c$'s gives
the right number of equations to define the subspace of
$\pi$-separable normalised states. Indeed, the (real parameter)
dimension of this subspace is
\[
d_\pi=\sum
\left( 2d^{|\pi_i|}-2\right),
\]
the expression in brackets counting the real and imaginary parts of
each coefficient of $\ket{\psi}_{\pi_i}$, with 2 subtracted for
normalisation and phase invariance. On the other hand, the number
$N_\pi$ of index sets that split $\pi$ is
\[
N_\pi=\left(d^n-1 \right)-\sum \left(d^{|\pi_i|}-1\right),
\]
this being the total number of index sets minus those that do not
split $\pi$, i.e. those where the 1-indices lie wholly within some
$\pi_i$. (One subtracts 1 for the all-zero sets so as not to
overcount.) But the total dimension of normalised states is
\[
d_{all}=2d^n-2
\]
and, as each equation $c_{i_1 \ldots i_n}=0$ contributes two
constraints from the vanishing of its real and imaginary parts, we
require
\[
d_\pi=d_{all} - 2N_\pi,
\]
which is readily seen to hold.

There is a result closely analogous to Theorem \ref{separability} for
maximal rank mixed states. For such a state the usual logarithm
exists, viz. $\log \rho=U \diag(\log \lambda_1, \ldots , \log \lambda_n)
U^\dagger$, where $U$ diagonalises $\rho$ with eigenvalues
$\lambda_i$.  Following the approach of Zhou \cite{Zhou09} we write
\begin{align}\label{mixedlog}
\log \rho=\sum f_{i_1 \ldots i_n}\sigma_{i_1} \ot \ldots \ot \sigma_{i_n},
\end{align}
where $\sigma_j$, $j=0,1, \ldots , d^2-1$ is a Hermitian operator
basis orthonormal with respect to the trace inner product
$\tr(AB^\dagger)/d$ (e.g. for $d=2$, the Pauli matrices). We may
choose $\sigma_0=\id$, in which case $\sigma_i$, $i \ge 1$ are trace-free.  By
the definition of the $\sigma_j$, the coefficients are given by
\begin{align}
  f_{i_1 \ldots i_n}=\frac{1}{d}\tr(\sigma_{i_1} \ot \ldots \ot
  \sigma_{i_n} \log \rho),
\end{align} 
and are real.  We now say $\rho$ is $\pi$-factorisable if
\begin{align}
\rho=\bigotimes_{i=1}^k \rho_{\pi_i},
\end{align} 
in which case
\begin{align}\label{logprod}
\log \rho=\sum_{i=1}^k \id_1 \ot \ldots \ot \log \rho_{\pi_i} \ot \ldots \ot \id_k.
\end{align} 
As above, we say that a set of indices ${i_1 \ldots i_n}$ splits $\pi$
if there are non-zero indices $i_j$ in more than one subset of
$\pi$. Then we have
\begin{theorem}\label{factorisability}
  An $n$-party mixed state $\rho$ is $\pi$-factorisable if and only if
  $f_{i_1 \ldots i_n}=0$ whenever ${i_1 \ldots i_n}$ splits $\pi$.
\end{theorem} 
\begin{proof}
  That $\tr(\sigma_{i_1} \ot \ldots \ot \sigma_{i_n} \log \rho)=0$ when
  ${i_1 \ldots i_n}$ splits $\pi$ follows from (\ref{logprod}) and
  the assumption that $\sigma_j$ is trace-free for $j \ge
  1$. Conversely, if the former condition holds, then (\ref{mixedlog})
  implies
\[
\log \rho =\sum_{i=1}^k \id_1 \ot \ldots \ot \tau_{\pi_i} \ot \ldots \ot \id_k,
\]
where $\tau_{\pi_i}$ is hermitian. Exponentiating gives
\[
\rho=\bigotimes_{i=1}^k e^{\tau_{\pi_i}},
\]
where each of the exponential factors has positive eigenvalues and can
be normalised by distributing numerical factors amongst the terms suitably.
\end{proof} 
Once again, the splitting condition gives the right number of
equations. The dimension of the subspace of $\pi$-factorisable states
is $d_\pi=\sum_i(d^{2|\pi_i|}-1)$, the number of index sets that split
$\pi$ is $N_\pi=(d^{2n}-1)-\sum_i(d^{2|\pi_i|}-1)$, and
$d_{all}=(d^{2n}-1)$ is the dimension of the space of $n$-partite
states. This time $f_{i_1 \ldots i_n}=0$ only yields one constraint,
since the $f$'s are real, so we require $d_\pi=d_{all} - N_\pi$, which
holds. 

There is a version of Theorem \ref{factorisability} that addresses
multipartite separability rather than factorisability, obtained by
borrowing the construction from classical multipartite squashed
entanglement \cite{CW03,Yang09}. We say $\rho_{.E}$ is a classical
extension of $\rho$ if $\tr_E \rho_{.E}=\rho$ and $\rho_{.E}$ has the
form $\sum_i p_i \tau^i \ot \proj{i}_E$, where $\tau^i$ are states on
$(\complex^d)_1 \otimes \ldots \otimes (\complex^d)_n$.
\begin{theorem}\label{mixed-separability}
  An $n$-party mixed state $\rho$ is $\pi$-separable if and only if
  there exists a classical extension $\rho_{.E}$ such that
  $\tr(\sigma_{i_1} \ot \ldots \ot \sigma_{i_n} \ot \proj{i}_E \log
  \rho_{.E})=0$ whenever ${i_1 \ldots i_n}$ splits $\pi$.
\end{theorem} 
\begin{proof}
  This follows the same lines as the proof of Theorem
  \ref{factorisability}, noting that $\rho$ is $\pi$-separable if and
  only if there exists a classical extension where the $\tau^i$ are
  $\pi$-factorisable.
\end{proof}

\section{Invariants for pure n-qubit states}

From now on we restrict attention to qubit states. We are interested
in polynomial invariants of the action of local unitaries, which we
take to be the group $SU(2)^n \times U(1)$. Given a state
$\ket{\psi}=\sum_{i_1 \ldots i_n} a_{i_1 \ldots i_n}\ket{i_1 \ldots
  i_n}$, we seek polynomials $I(\ket{\psi})$ in the
coefficients $a_{i_1 \ldots i_n}$ and their complex conjugates
$\bar{a}_{i_1 \ldots i_n}$ satisfying
\begin{align}
I(g\ket{\psi})=I(\ket{\psi}),
\end{align}
for any $g \in SU(2)^n \times U(1)$. Equivalently, we can express the
group action on the $i$th copy of $SU(2)$ by
\begin{align}\label{a-action}
  \rho_i(g)a_{j_1 \ldots j_i \ldots j_n}=\sum_{k_i}g_{j_i k_i} a_{j_1 \ldots k_i \ldots    j_n},
\end{align}
and extend this to any monomial $m=\prod_{q=1}^k a_{j_{q,1} \ldots
  j_{q,n}}$ by $\rho_i(g)m=\prod \left( \rho_i(g)a_{j_{q,1} \ldots
    j_{q,n}} \right)$. We then require the polynomial $I$ to be
invariant under $\rho_i(g)$ for all $i$ and $g
\in SU(2)$ as well as invariant under phase changes introduced by
$U(1)$.

With one exception, all the invariants we discuss are
real-valued. However, for systems of three or more qubits real-valued
invariants do not suffice to form a fundamental set.  Indeed, if $J$
is real, then it cannot distinguish a state from its complex
conjugate, since $J(\ket{\bar \psi})=\bar J(\ket{\psi})=
J(\ket{\psi})$. For two qubits, a state and its conjugate are
equivalent under local unitaries; this is not true for more qubits
\cite{Acinetal00}.

The cumulant $c_{i_1 \ldots i_n}$ can be made into a polynomial
$d_{i_1 \ldots i_n}$ by putting
\begin{align}\label{ddef}
  d_{i_1 \ldots i_n}=a_{0 \ldots 0}^\theta c_{i_1 \ldots i_n},
\end{align}
where $\theta$ is the the number of 1's in the set $i_1 \ldots i_n$.
This is a homogeneous polynomial of degree $\theta$ in the $a$'s. For
instance, from (\ref{c11}) we have
$d_{11}=a_{00}^2c_{11}=a_{11}a_{00}-a_{10}a_{01}$. In general, we
have the following formula for $d_{i_1 \ldots i_n}$. Let
$S=\{k|i_k=1\}$, so $|S|=\theta$ is the degree of $d_{i_1 \ldots
  i_n}$. If $A \subset S$ let $a_A$ denote $a_{j_1 \ldots j_n}$ where
$j_k=1$ if $k \in A$ and $j_k=0$ otherwise.  Then
\begin{align}\label{partitions}
d_{i_1 \ldots i_n}=\sum_\pi (-1)^{|\pi|-1}(|\pi|-1)! a_{\emptyset}^{\theta-|\pi|}\prod_{i=1}^{|\pi|}a_{\pi_i},
\end{align}
where the sum is over all partitions $\pi=\{\pi_1, \ldots ,
\pi_{|\pi|}\}$ of $S$, $|\pi|$ being the number of subsets in the
partition.

The action of local unitaries on $d_{11}$ is given by
\begin{align}
\rho_i(g)d_{11}=\Delta d_{11}, \label{nice}
\end{align}
for $i=1,2$, where $\Delta=\det g$. To remove the dependence on the
phase $\Delta$, i.e. to get invariance under the action of $U(1)$, we
take
\begin{align}
I_{11}=|d_{11}|^2.
\end{align}
This gives us a local unitary invariant, and
\begin{align}\label{2inv}
I_{11}=|d_{11}|^2.
\end{align}
This gives us a local unitary invariant, and there is just one such
invariant for normalised 2-qubit states. The transformation of
cumulants for three or more qubits is more complicated (Theorem
\ref{action}). However, we can always get an invariant by integrating
the squared modulus of $d_{i_1 \ldots i_n}$ over the group $SU(2)^n$,
and this prompts the following:
\begin{definition}
\begin{align}\label{Idef}
  I_{i_1 \ldots i_n}=\gamma_{n,\theta}\int_{SU(2)^n} |\rho_1(g_1) \ldots \rho_n(g_n)d_{i_1 \ldots
    i_n}|^2dg_1 \ldots dg_n.
\end{align} 
Here the integral is over the Haar measure, and the constant factor
$\gamma_{n,\theta}=(\theta+1)^{n-\theta}(\theta-1)^\theta$ is
introduced for later convenience.
\end{definition}

We now explore the properties of these invariants. First, we note that
we can give a more satisfying, basis independent, version of Theorem
\ref{separability}:
\begin{theorem}[Separability criterion]\label{separability-criterion}
  An $n$-qubit state $\ket{\psi}$ is $\pi$-separable if and only if
  $I_{i_1 \ldots i_n}=0$ whenever ${i_1 \ldots i_n}$ splits $\pi$.
\end{theorem}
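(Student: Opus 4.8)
The plan is to reduce everything to Theorem \ref{separability} by exploiting two elementary features of the integral in (\ref{Idef}): its integrand is a continuous, pointwise nonnegative function of the tuple $(g_1,\dots,g_n)$, and the Haar measure on the compact group $SU(2)^n$ has full support. Throughout I would abbreviate $g=g_1\otimes\cdots\otimes g_n$ and use the (essentially tautological) identity that, because $\rho_i(\cdot)$ acts on the coefficients $a_{j_1\ldots j_n}$ exactly as a local operation on system $i$ acts on $\ket{\psi}$, one has $\rho_1(g_1)\cdots\rho_n(g_n)d_{i_1\ldots i_n}$ evaluated at $\ket{\psi}$ equal to $d_{i_1\ldots i_n}(g\ket{\psi})$. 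So the integrand at $\ket{\psi}$ is just $|d_{i_1\ldots i_n}(g\ket{\psi})|^2$, and by the two features above $I_{i_1\ldots i_n}(\ket{\psi})=0$ if and only if $d_{i_1\ldots i_n}(g\ket{\psi})=0$ for every $g\in SU(2)^n$.

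The key auxiliary step I would prove first is that, whenever $i_1\ldots i_n$ splits $\pi$, the polynomial $d_{i_1\ldots i_n}$ vanishes identically on \emph{all} $\pi$-separable states, including the ones whose $\ket{0\ldots 0}$-component is zero (which Theorem \ref{separability} does not address). Given a $\pi$-separable $\ket{\phi}=\bigotimes_{j=1}^{k}\ket{\phi}_{\pi_j}$, I would perturb it to $\ket{\phi(t)}=\bigotimes_{j=1}^{k}\bigl(\ket{\phi}_{\pi_j}+t\ket{0\ldots 0}_{\pi_j}\bigr)$, where $\ket{0\ldots 0}_{\pi_j}$ is the all-zeros basis vector of the block $\pi_j$. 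Each $\ket{\phi(t)}$ is $\pi$-separable, its coefficients are polynomials in $t$, and its $\ket{0\ldots 0}$-coefficient is a monic polynomial in $t$ of degree $k\ge 1$, hence nonzero for all but finitely many $t$. For those $t$ the necessity half of Theorem \ref{separability} gives $c_{i_1\ldots i_n}(\ket{\phi(t)})=0$, so $d_{i_1\ldots i_n}(\ket{\phi(t)})=0$ by (\ref{ddef}). Since $t\mapsto d_{i_1\ldots i_n}(\ket{\phi(t)})$ is a polynomial vanishing at infinitely many points, it vanishes identically, and evaluating at $t=0$ gives $d_{i_1\ldots i_n}(\ket{\phi})=0$.

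Granting this, the forward implication is immediate: if $\ket{\psi}$ is $\pi$-separable then so is $g\ket{\psi}$ for every $g\in SU(2)^n$, so $d_{i_1\ldots i_n}(g\ket{\psi})=0$ throughout and $I_{i_1\ldots i_n}(\ket{\psi})=0$ for every splitting index set. For the converse, assume $I_{i_1\ldots i_n}(\ket{\psi})=0$ for every such index set; by the observation of the first paragraph, $d_{i_1\ldots i_n}(g\ket{\psi})=0$ for all $g\in SU(2)^n$ and all splitting index sets. I would then pick a particular $g$ with $\bra{0\ldots 0}g\ket{\psi}\ne 0$: such $g$ exists because $\bra{0\ldots 0}g\ket{\psi}$ is the overlap of $\ket{\psi}$ with a product of unit vectors, and product vectors span $(\complex^2)^{\otimes n}$, so this overlap cannot vanish for every $g$ unless $\ket{\psi}=0$. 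For that $g$, writing $\ket{\phi}=g\ket{\psi}$, the relation $d_{i_1\ldots i_n}(\ket{\phi})=(\bra{0\ldots 0}\phi\rangle)^{\theta}c_{i_1\ldots i_n}(\ket{\phi})=0$ together with $\bra{0\ldots 0}\phi\rangle\ne 0$ forces $c_{i_1\ldots i_n}(\ket{\phi})=0$ for every splitting index set; Theorem \ref{separability} then makes $\ket{\phi}$ $\pi$-separable, and applying the local unitary $g^{-1}$ shows $\ket{\psi}$ is $\pi$-separable.

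I expect the main obstacle to be the auxiliary step of the second paragraph, that is, removing the hypothesis $a_{0\ldots 0}\ne 0$ present in Theorem \ref{separability}, and the closely related point that some local unitary carries $\ket{\psi}$ to a state with nonzero $\ket{0\ldots 0}$-component. Everything else — nonnegativity/continuity of the integrand, full support of Haar measure, and the identification of $\rho_i$ with a local operation — is routine; one could alternatively phrase the vanishing arguments through the Zariski density of $SU(2)$ in $SL(2,\complex)$, but the polynomial-perturbation argument above keeps the proof inside $SU(2)^n$ and avoids that machinery.
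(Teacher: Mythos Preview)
Your proof is correct and follows essentially the same approach as the paper's: both use the perturbation $\bigotimes_j(\ket{\phi}_{\pi_j}+t\ket{0\ldots 0}_{\pi_j})$ together with a continuity/polynomiality argument to remove the hypothesis $a_{0\ldots 0}\ne 0$ from Theorem~\ref{separability}, and both then handle the converse by choosing a local unitary $g$ with $\bra{0\ldots 0}g\ket{\psi}\ne 0$ and applying Theorem~\ref{separability} to $g\ket{\psi}$. You fill in several details the paper leaves implicit (nonnegativity and full Haar support forcing the integrand to vanish pointwise, and the spanning argument for the existence of such a $g$), but the structure is identical.
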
 
\begin{proof}
  Suppose $\ket{\psi}$ is $\pi$-separable, so $\ket{\psi}=\bigotimes
  \ket{\psi}_{\pi_i}$. If $a_{0 \ldots 0} \ne 0$, Theorem
  \ref{separability} says that $c_{i_1 \ldots i_n}=0$ and hence
  $d_{i_1 \ldots i_n}=0$. If $a_{0 \ldots 0} = 0$, define
  $\ket{\psi}_x=\bigotimes (\ket{\psi}_{\pi_i} + x\ket{0 \ldots
    0}_{\pi_i})$.  If $x \ne 0$, $a_{0 \ldots 0} \ne 0$, so $d_{i_1
    \ldots i_n}=0$ for this state. By continuity, $d_{i_1 \ldots
    i_n}=0$ for $x=0$, i.e. for the original $\ket{\psi}$. Thus
  $I_{i_1 \ldots i_n}=0$. Conversely, if $I_{i_1 \ldots i_n}=0$, then
  $d_{i_1 \ldots i_n}(g\ket{\psi})=0$ for all $g$, and for some $g$ we
  must have $a_{0 \ldots 0} \ne 0$, so Theorem \ref{separability} can
  be applied.
\end{proof}

Next we define the action of elements $g \in SU(2)$ on our cumulant
polynomials. We will need some notation.
\begin{definition}
  Define $\cS_i$ by 
\begin{align}\label{Adef}
  \cS_i a_{l_1 \ldots 0_i \ldots l_n}&=a_{l_1 \ldots 1_i \ldots l_n},\\
  \cS_i a_{l_1 \ldots 1_i \ldots l_n}&=0.
\end{align}
Now, given a monomial $m=\prod_{q=1}^\theta a_{i_{1,q} \ldots
  i_{n,q}}$ of degree $\theta$, define $\cR_{i,k}$ to be the
coefficient of $x^k$ in $\prod (\id+xS_i)a_{i_{1,q} \ldots
  i_{n,q}}$. Extend this definition by linearity to any homogeneous
polynomial of degree $\theta$.
\end{definition}
For instance, for $\theta=3$ we have
\[
\cR_{i,1}m =\left( \cS_i \id \id+\id \cS_i \id + \id \id \cS_i \right)m.
\]
Note that, by virtue of the symmetry, this definition does not depend
on the order of the $a$'s in $m$. We can think of $\cR_{i,k}$ as being
like a raising operator (hence the `R'). For example
\begin{align}
\label{R0} \cR_{3,0}d_{110}&=a_{110}a_{000}-a_{100}a_{010},\\
\label{R1} \cR_{3,1}d_{110}&=a_{111}a_{000}+a_{110}a_{001}-a_{101}a_{010}-a_{100}a_{011},\\
\label{R2} \cR_{3,2}d_{110}&=a_{111}a_{001}-a_{101}a_{011},
\end{align}

\begin{lemma}\label{xi-sum}
  If $l_i=1$ and at least one other index in $d_{l_1 \ldots l_n}$ is
  1, then $\cR_{i,\theta}d_{l_1 \ldots l_n}=\cR_{i,\theta-1}d_{l_1
    \ldots l_n}=0$.
\end{lemma}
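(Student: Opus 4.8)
The plan is to bundle the operators $\cR_{i,k}$ for fixed $i$ into a single generating polynomial. Write $[e_A]$ for the operation of extracting the coefficient of $e_A$. By the very definition of $\cR_{i,k}$, for any polynomial $P$ homogeneous of degree $\theta$ in the coefficients one has
\[
\sum_{k} x^{k}\,\cR_{i,k}P \;=\; P\big(\{\tilde a_A\}\big),\qquad \tilde a_A:=a_A+x\,\cS_i a_A ,
\]
i.e.\ the generating function is simply $P$ re-evaluated at the ``$x$-shifted'' coefficients. Taking $P=d_{l_1\ldots l_n}$ with $l_i=1$ and $\theta=|\{k:l_k=1\}|$, the lemma becomes the assertion that $d_{l_1\ldots l_n}\big(\{\tilde a_A\}\big)$, which is a priori a polynomial of $x$-degree at most $\theta$, in fact has $x$-degree at most $\theta-2$. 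I would aim to prove the sharper statement that it has $x$-degree $\le\theta-1$ (this already gives $\cR_{i,\theta}d_{l_1\ldots l_n}=0$ and needs only $l_i=1$) and that the coefficient of $x^{\theta-1}$ vanishes by an explicit cancellation (this is where $\theta\ge2$ enters).

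To identify $d_{l_1\ldots l_n}\big(\{\tilde a_A\}\big)$ I would interpret the substitution $a_A\mapsto\tilde a_A$ inside the algebra $\cA_n^{(2)}$. Write $S=\{k:l_k=1\}=\{i\}\sqcup S'$ and split the algebra element as $\psi=\psi_0+e_i\psi_1$ into its $e_i$-free part and the rest. Then $\sum_A\tilde a_A e_A=\phi_0+e_i\phi_1$ with $\phi_0:=\psi_0+x\psi_1$ and $\phi_1:=\psi_1$ (equivalently, the shifted element is $\psi$ after the substitution $e_i\mapsto e_i+x$, which is legitimate precisely because $\psi$ is linear in $e_i$). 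Since $d_{l_1\ldots l_n}=a_{0\ldots0}^{\,\theta}\,c_{l_1\ldots l_n}$ and $c_{l_1\ldots l_n}=[e_S]\log\psi$, this gives $d_{l_1\ldots l_n}\big(\{\tilde a_A\}\big)=(\tilde a_\emptyset)^{\theta}\,[e_S]\log(\phi_0+e_i\phi_1)$. Because $e_i^2=0$ one has $\log(\phi_0+e_i\phi_1)=\log\phi_0+e_i\,\phi_0^{-1}\phi_1$, and the crucial observation is that $\phi_1=\partial\phi_0/\partial x$, so $\phi_0^{-1}\phi_1=\partial_x\log\phi_0$. Extracting the coefficient of $e_S=e_ie_{S'}$ then yields $[e_S]\log(\phi_0+e_i\phi_1)=\partial_x C_{S'}(x)$, where $C_{S'}(x):=[e_{S'}]\log\phi_0$.

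The rest is elementary. Let $d_{S'}$ denote the polynomial $d_{i_1\ldots i_n}$ attached to the index string with $1$'s exactly on $S'$. The only coefficients of $\phi_0$ entering $c_{S'}=d_{S'}/a_\emptyset^{|S'|}$ via the partition formula (\ref{partitions}) are those indexed by subsets of $S'$ together with $\emptyset$, and each such coefficient equals $\tilde a_A=a_A+x\,a_{A\cup\{i\}}$; hence $C_{S'}(x)=\tilde d_{S'}(x)/(\tilde a_\emptyset)^{\theta-1}$, where $\tilde d_{S'}(x):=d_{S'}\big(\{\tilde a_A\}\big)$ is a genuine polynomial of $x$-degree $\le|S'|=\theta-1$. (Here the hypothesis ``at least one other index is $1$'', i.e.\ $S'\ne\emptyset$, is exactly what keeps the partition formula for $d_{S'}$ meaningful.) Differentiating the quotient and clearing $(\tilde a_\emptyset)^{\theta}$ gives the polynomial identity
\[
d_{l_1\ldots l_n}\big(\{\tilde a_A\}\big)=\tilde a_\emptyset\,\tilde d_{S'}'(x)-(\theta-1)\,a_{\{i\}}\,\tilde d_{S'}(x).
\]
Both terms have $x$-degree $\le\theta-1$, so the coefficient of $x^{\theta}$ is $0$, which is $\cR_{i,\theta}d_{l_1\ldots l_n}=0$. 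For the coefficient of $x^{\theta-1}$ only the leading coefficient of $\tilde d_{S'}$ survives: it enters the first term through the $x$ in $\tilde a_\emptyset$ and enters the second term directly, in both cases with the factor $(\theta-1)a_{\{i\}}$, and the two contributions cancel; this is $\cR_{i,\theta-1}d_{l_1\ldots l_n}=0$.

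I expect the only real work to be bookkeeping, not a conceptual hurdle: getting $\phi_0,\phi_1$ right (and recalling that $e_i\mapsto e_i+x$ may be substituted only into expressions linear in $e_i$, which forces one to peel off the $e_i$-part of $\psi$ by hand rather than substitute directly into $\log\psi$), and checking that $\log(\phi_0+e_i\phi_1)=\log\phi_0+e_i\phi_0^{-1}\phi_1$, the term-by-term differentiation, and $\partial_x\log\phi_0=\phi_0^{-1}\partial_x\phi_0$ are all legitimate in $\complex[x]\otimes\cA_n^{(2)}$ and its fraction field despite the division by $\tilde a_\emptyset$ — routine, since the ring is commutative and the final identity is polynomial. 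A purely combinatorial route is also available: expand $\cR_{i,\theta-1}d_{l_1\ldots l_n}$ directly from (\ref{partitions}), group the partitions of $S$ by the block $B$ containing $i$, and recognise the inner sum as $a_{\{i\}}^{|B|-1}$ times the ``derivative in the unit coefficient'' of the cumulant polynomial of the complementary index set. It produces the same cancellation, but is more cumbersome to organise, so I would present the algebraic version above.
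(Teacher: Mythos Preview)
Your argument is correct and takes a genuinely different route from the paper's.

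The paper proves $\cR_{i,\theta}d=0$ by a one-line counting argument (each monomial of $d$ has only $\theta-1$ zeros at position $i$, so one cannot flip $\theta$ of them), and proves $\cR_{i,\theta-1}d=0$ by observing that flipping all $\theta-1$ available zeros produces $d$ evaluated at a state whose coefficients are independent of the $i$th index; such a state factorises as $\bigl(\ket{0}+\ket{1}\bigr)_i\otimes\ket{\cdot}$, and the paper then invokes the separability result (the forward direction of Theorem~\ref{separability-criterion}) to conclude $d=0$ there. Your approach instead packages all the $\cR_{i,k}d$ into a single generating polynomial, identifies it inside $\cA_n^{(2)}$ via $\log(\phi_0+e_i\phi_1)=\log\phi_0+e_i\,\partial_x\log\phi_0$, and reads off the top two coefficients from the explicit identity
\[
d_S\bigl(\{\tilde a_A\}\bigr)=\tilde a_\emptyset\,\tilde d_{S'}'(x)-(\theta-1)\,a_{\{i\}}\,\tilde d_{S'}(x).
\]
The paper's proof is shorter and more conceptual, leveraging machinery already in place; yours is self-contained (no appeal to the separability theorem) and, as a bonus, its $x=0$ specialisation is exactly the recursion $d_S=a_\emptyset\,\cR_{i,1}d_{S'}-(\theta-1)\,a_{\{i\}}\,d_{S'}$, which the paper derives independently later (equation~(\ref{induction-k}); note the factor there should read $-(k-1)$ rather than $-2$, your computation confirms this). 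Your treatment of the division by $\tilde a_\emptyset$ is fine: the displayed identity is polynomial in the $a_A$ and in $x$, so its validity on the Zariski-open set where $a_\emptyset\neq 0$ implies it everywhere.
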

\begin{proof}
  Writing $d$ for $d_{l_1 \ldots l_n}$, $\cR_{i,\theta}d=0$ because we
  cannot add $\theta$ 1's to the $\theta-1$ $a$'s that originally had
  zeros in position $i$.  $\cR_{i,\theta-1}d$ has 1's at position $i$
  in every a, so we can write $\cR_{i,\theta-1}d=d(\ket{\psi})$, where
  $\ket{\psi}$ is a state with $a_{l_1 \ldots 0_i \ldots l_n}=a_{l_1
    \ldots 1_i \ldots l_n}$. This means that $\ket{\psi}$ factorises
  as $\left(\frac{\ket{0}+\ket{1}}{\sqrt{2}}\right)_i \otimes
  \ket{\psi}_{\cN-i}$.  We now invoke Theorem
  \ref{separability-criterion}, since $\{l_1, \ldots, l_n\}$ splits
  the partition $\{i\},\{\cN-i\}$.
\end{proof}
\begin{lemma}\label{Ld}
  If we write $L_id$ for the result of replacing all the 1's in
  position $i$ in $d$ by 0, and if $l_i=1$, then $L_id=0$.
\end{lemma}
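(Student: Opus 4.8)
The plan is to recognise $L_id_{l_1\ldots l_n}$ as the cumulant polynomial evaluated on a product state, and then invoke the separability criterion, exactly as in the proof of Lemma~\ref{xi-sum}. First I would record the bookkeeping observation that, by the explicit formula~(\ref{partitions}), the polynomial $d=d_{l_1\ldots l_n}$ involves only coefficients $a_A$ with $A\subseteq S=\{k\,|\,l_k=1\}$. Since $l_i=1$ we have $i\in S$, so every variable $a_A$ occurring in $d$ has either a $0$ in slot $i$ (when $i\notin A$) or a $1$ there (when $i\in A$), and applying $L_i$ is precisely the substitution $a_A\mapsto a_{A\setminus\{i\}}$ on the latter variables, leaving the former untouched.

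Next I would introduce the state $\ket{\psi^{[i]}}$ whose amplitudes are $(\psi^{[i]})_{j_1\ldots j_n}=a_{j_1\ldots 0_i\ldots j_n}$, i.e. the amplitudes of $\ket{\psi}$ with the $i$-th index forced to $0$. These amplitudes do not depend on the $i$-th bit, so $\ket{\psi^{[i]}}=(\ket{0}+\ket{1})_i\otimes\ket{\chi}_{\cN-i}$ for a suitable $\ket{\chi}$, and in particular $\ket{\psi^{[i]}}$ is separable with respect to the partition $\{i\},\{\cN-i\}$. Because (by the first paragraph) $d$ only sees coefficients $a_A$ with $A\subseteq S$, replacing each such $a_A$ by $(\psi^{[i]})_A$ effects exactly the substitution $a_A\mapsto a_{A\setminus\{i\}}$ described above; hence $L_id=d_{l_1\ldots l_n}(\ket{\psi^{[i]}})$ as an identity of polynomials in the $a$'s.

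Finally, assuming (as in Lemma~\ref{xi-sum}) that some index other than $i$ is also $1$, the set $\{l_1\ldots l_n\}$ splits the partition $\{i\},\{\cN-i\}$, so Theorem~\ref{separability-criterion} gives $I_{l_1\ldots l_n}(\ket{\psi^{[i]}})=0$ and therefore $d_{l_1\ldots l_n}(g\ket{\psi^{[i]}})=0$ for every $g\in SU(2)^n$; taking $g=\id$ gives $L_id=d_{l_1\ldots l_n}(\ket{\psi^{[i]}})=0$. Using Theorem~\ref{separability-criterion} rather than Theorem~\ref{separability} sidesteps any worry about whether the constant term of $\ket{\psi^{[i]}}$ vanishes; alternatively one can deform the constant term and conclude by continuity as in the proof of Theorem~\ref{separability-criterion}.

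There is no genuinely hard step here. The one thing to be careful about is the identification claim of the second paragraph — that $L_i$, acting on the formal polynomial $d_{l_1\ldots l_n}$, agrees with evaluation at $\ket{\psi^{[i]}}$ — which relies entirely on the support statement "$d$ uses only $a_A$ with $A\subseteq S$" read off from~(\ref{partitions}). A purely combinatorial alternative is to substitute $a_B\mapsto a_{B\setminus\{i\}}$ directly into~(\ref{partitions}) and group the partitions of $S$ by whether $\{i\}$ is a singleton block: a partition in which $\{i\}$ is a singleton cancels against the partitions obtained from the remaining block structure by merging $i$ into one of the other blocks, summed over that block, and the signs $(-1)^{m}$ and $(-1)^{m-1}$ together with the factor $m$ from the $m$ choices of block produce a vanishing sum. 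I would regard the separability argument as the cleaner one and present it as the main proof.
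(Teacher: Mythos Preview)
Your proposal is correct and follows exactly the route the paper takes: its one-line proof ``the same argument as in the proof above shows that $L_id=d(\ket{\psi})$ where $\ket{\psi}$ factorises'' is precisely your construction of $\ket{\psi^{[i]}}$ followed by an appeal to Theorem~\ref{separability-criterion}. You have simply spelled out the bookkeeping (the support observation from~(\ref{partitions}) and the identification $L_id=d(\ket{\psi^{[i]}})$) that the paper leaves implicit; your combinatorial alternative is extra and not needed.
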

\begin{proof}
  The same argument as in the proof above shows that
  $L_id=d(\ket{\psi})$ where $\ket{\psi}$ factorises.
\end{proof}
\begin{theorem}[Action of local unitaries.]\label{action}
  Let $l_i$ be one of the indices of $d_{l_1, \ldots, l_n}$, and let
  $g=\left(\begin{array}{ccc}u & v\\ w &
      z \end{array}\right)$. Assume that the index set $l_1, \ldots,
  l_n$ contains at least two 1's. Then if $l_i=0$,
  \begin{align}\label{i=0}
    \rho_i(g)d=\sum_{k=0}^\theta u^{\theta-k}v^k \cR_{i,k}d,
  \end{align}
    and if $l_i=1$,
  \begin{align}\label{i=1}
    \rho_i(g)d=\Delta \sum_{k=0}^{\theta-2} u^{\theta-k-2}v^k \cR_{i,k}d,
  \end{align}
where $d$ stands for $d_{l_1, \ldots, l_n}$. 
\end{theorem}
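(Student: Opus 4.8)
The plan is to verify (\ref{i=0}) and (\ref{i=1}) as formal identities in the matrix entries $u,v,w,z$ of $g$, so that everything reduces to bookkeeping with the substitution rule (\ref{a-action}) and the partition formula (\ref{partitions}). Since $\rho_i(g)$ is multiplicative on monomials, I only need its effect on a single coefficient: by (\ref{a-action}) a factor $a$ with a $0$ in slot $i$ is sent to $u\,a+v\,\cS_i a$, and a factor $a$ with a $1$ in slot $i$ is sent to $z\,a+w\,a^{\flat}$, where $a^{\flat}$ denotes $a$ with its $i$th index lowered to $0$. For $l_i=0$ this already settles matters quickly: by (\ref{partitions}) every coefficient occurring in $d$ is $a_\emptyset$ or some $a_{\pi_j}$ with $\pi_j\subseteq S$ and $i\notin S$, so all of them have a $0$ in slot $i$; hence for a monomial $m=a_1\cdots a_\theta$ of $d$ we get $\rho_i(g)m=\prod_q(u\,a_q+v\,\cS_i a_q)$, and expanding the product and matching against the definition of $\cR_{i,k}$ gives $\rho_i(g)m=\sum_{k=0}^{\theta}u^{\theta-k}v^{k}\,\cR_{i,k}m$; summing over the monomials of $d$ and using linearity of $\cR_{i,k}$ yields (\ref{i=0}).

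For $l_i=1$ I would argue as follows. In any monomial $m=a^{*}a_1\cdots a_{\theta-1}$ of $d$, exactly one factor $a^{*}$ — the one indexed by the block of the underlying partition that contains $i$ — has a $1$ in slot $i$, while the other $\theta-1$ factors have a $0$ there, so $\rho_i(g)m=(z\,a^{*}+w\,a^{*\flat})\prod_q(u\,a_q+v\,\cS_i a_q)$. The $z$-part of this is $z\sum_k u^{\theta-1-k}v^{k}\,\cR_{i,k}m$, because raising $a^{*}$ annihilates it and so $\cR_{i,k}m$ only ever raises among the $a_q$. For the $w$-part, write $L_i$ for the operator lowering every $1$ in slot $i$ and $T_k(m)$ for the piece of the $w$-part in which exactly $k$ of the $a_q$ are raised; the key step is the identity $\cR_{i,k}(L_i m)=T_k(m)+\cR_{i,k-1}(m)$, which comes from splitting $\cR_{i,k}(L_i m)$ according to whether or not the lowered factor $a^{*\flat}$ is among the $k$ raised factors, the "yes" case restoring $a^{*}$ and producing exactly $\cR_{i,k-1}(m)$. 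Substituting, then summing over the monomials of $d$, the terms $\cR_{i,k}(L_i d)$ vanish by Lemma \ref{Ld}, so the $w$-part collapses to $-wv\sum_k u^{\theta-2-k}v^{k}\,\cR_{i,k}d$; together with the $z$-part $zu\sum_k u^{\theta-2-k}v^{k}\,\cR_{i,k}d$ this gives $(zu-wv)\sum_k u^{\theta-2-k}v^{k}\,\cR_{i,k}d=\Delta\sum_k u^{\theta-2-k}v^{k}\,\cR_{i,k}d$. Lastly, under the hypothesis that the index set has at least two $1$'s, Lemma \ref{xi-sum} gives $\cR_{i,\theta-1}d=\cR_{i,\theta}d=0$, which lets me restrict the summation to $0\le k\le\theta-2$ and obtain (\ref{i=1}).

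I expect the main obstacle to be precisely the reorganisation of the $w$-part in the $l_i=1$ case: spotting the telescoping identity relating $T_k(m)$, $\cR_{i,k}(L_i m)$ and $\cR_{i,k-1}(m)$, and then tracking the powers of $u$ and $v$ carefully enough that the $w$- and $z$-contributions fuse into $\Delta=uz-vw$ times one and the same sum. It is worth stressing that Lemmas \ref{Ld} and \ref{xi-sum} carry real weight here, not just cosmetic tidying: without $L_i d=0$ the $w$-part would leave behind terms that do not combine into a multiple of $\Delta$, and without $\cR_{i,\theta-1}d=0$ the natural range of summation would include a spurious $k=\theta-1$ term that only makes sense once that coefficient is known to vanish — so the clean closed form really rests on those vanishing statements, and hence ultimately on the separability criterion of Theorem \ref{separability-criterion}.
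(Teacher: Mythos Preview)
Your proof is correct and is essentially the same argument as the paper's: both split the $l_i=1$ case according to the unique factor carrying a $1$ in slot $i$, use the identity $\cR_{i,k}(L_i m)=T_k(m)+\cR_{i,k-1}(m)$ (the paper writes the equivalent ``coefficient of $u^{\theta-k-2}v^k(vw)$ is $\cR_{i,k+1}(Ld)-\cR_{i,k}d$''), and then invoke Lemmas \ref{Ld} and \ref{xi-sum} to kill the $L_id$ terms and the top coefficient so that the $z$- and $w$-contributions fuse into $\Delta$ times a common sum. One small presentational point: when you write the $z$-part as $zu\sum_k u^{\theta-2-k}v^{k}\cR_{i,k}d$ you are already tacitly using $\cR_{i,\theta-1}d=0$ to drop the $k=\theta-1$ term, so strictly speaking Lemma \ref{xi-sum} enters a line earlier than your ``Lastly'' suggests.
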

\begin{proof}
  When $l_i=0$ and $m=\prod_{q=1}^\theta a_{j_{q_1} \ldots
    j_{q_n}}$,
\begin{align}
  \rho_i(g) m=\prod_{q=1}^\theta \left( ua_{j_{q,1} \ldots 0_i
 \ldots j_{q,n}}+va_{j_{q,1} \ldots 1_i \ldots j_{q,n}}\right),
\end{align}
and it is clear that the coefficient of $u^\theta$ is the original
monomial $m$, and the coefficient of $u^{\theta-k}v^k$ is
$\cR_{i,k}d$.

When $l_i=1$, 
\begin{align}
  \rho_i(g)m=\left( wa_{j_{p,1} \ldots 0_i \ldots j_{p,n}}+za_{j_{p,1} \ldots
      1_i \ldots j_{p,n}}\right)\ \prod_{q=1,q \ne p}^\theta \left( u a_{j_{q,1} \ldots 0_i \ldots
      j_{q,n}}+v a_{j_{q,1} \ldots 1_i \ldots j_{q,n}}\right),
\end{align}
where $a_{j_{p,1} \ldots \ldots j_{p,n}}$ is the unique $a$ in $m$
that has a 1 at position $i$. The coefficient of $zv^{\theta-1}$
is $\cR_{i,\theta-1}d$, which is zero by Lemma \ref{xi-sum}, and for $k
\le \theta-2$ the coefficient of $u^{\theta-k-2}v^k(uz)$ is
$cP_{i,k}d$. The coefficient of $u^{\theta-1}w$ is $Ld$, which is
zero by Lemma \ref{Ld}, and for $k \ge 0$, the coefficient of
$u^{\theta-k-2}v^k(vw)$ is
$\cR_{i,k+1}(Ld)-\cR_{i,k}d=-\cR_{i,k}d$. Since
$\Delta=uz-vw$, the theorem follows.
\end{proof}
The polynomials $\cR_{i,k}d$ appearing in equations (\ref{i=0}) and
(\ref{i=1}) form an irreducible representation for $SU(2)$. To see
this, let $h=\left(\begin{array}{ccc}a & b\\ c &
    d \end{array}\right)$. Then, for $l_i=0$, applying $\rho_i(h)$ to
(\ref{i=0}) gives
\begin{align}
  \label{rep} \sum_{k=0}^\theta u^{\theta-k}v^k
  \left(\rho_i(h)\cR_{i,k}d\right)=\rho_i(h)\rho_i(g)d=\rho_i(gh)=\sum_{k=0}^\theta
  (ua+vc)^{\theta-k}(ub+vd)^k \cR_{i,k}d,
\end{align}
and equating coefficients of $u^{\theta-k}v^k$ in the left- and
right-hand sides of (\ref{rep}) gives the action of $h$ at position
$i$ on all the polynomials $\cR_{i,k}d$ and thus defines the
representation matrix, which can easily be recognised as the symmetric
representation with Young diagram $(\theta)$. When $l_i=1$, the same
argument applied to (\ref{i=1}) shows that we obtain the
representation with Young diagram $(\theta-1,1)$. Classically,
cumulants were called ``half-invariants'' \cite{Thiele1,Dressel40}
because, if $c$ is the classical joint cumulant in the random
variables $X_1 \ldots X_n$, then $c$ is mapped to $z^nc$ when $X_i$ is
transformed by the affine map $X_i \to zX_i+w$. The equivalent of an
affine map in our setting is the group of matrices
$\left(\begin{array}{ccc}1 & 0\\ w & z \end{array}\right)$ with $z \ne
0$. From Theorem \ref{action} we see that the representation becomes
1-dimensional, sending $d \to z^\theta d$.

\begin{theorem}[Formula for invariants]\label{explicit}
If the index set $i_1, \ldots , i_n$ contains at least two 1's, then
\begin{align}\label{formula}
  I_{i_1, \ldots, i_n}=\sum_{k_1, \ldots , k_n} \left|\prod_{p=1}^n
    \left( \alpha^{i_p}_{k_p} \cR_{i_p,k_p} \right) d_{i_1, \ldots, i_n} \right|^2.
  \end{align}
  Here, if $i_p=0$, $k_p$ ranges from 0 to $\theta$, and
  $\alpha^0_{k_p}=\binom{\theta}{k_p}^{-1}$. If $i_p=1$, $k_p$
  ranges from 0 to $\theta-2$, and
  $\alpha^1_{k_p}=\binom{\theta-2}{k_p}^{-1}$
\end{theorem}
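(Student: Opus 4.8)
The plan is to substitute the expansion of $\rho_1(g_1)\cdots\rho_n(g_n)d_{i_1\ldots i_n}$ furnished by Theorem \ref{action} into the definition (\ref{Idef}), to carry out the $SU(2)^n$ integral one group factor at a time, and then to observe that the constant $\gamma_{n,\theta}$ has been chosen exactly so as to cancel the dimension factors the integration produces. Write $d=d_{i_1\ldots i_n}$ and $g_p=\left(\begin{array}{cc}u_p & v_p\\ w_p & z_p\end{array}\right)\in SU(2)$, so that $\det g_p=1$. First I would prove, by induction on the number of group elements applied, the product expansion
\[
\rho_1(g_1)\cdots\rho_n(g_n)\,d=\sum_{k_1,\ldots,k_n}\Big(\prod_{p=1}^{n}\mu^{(p)}_{k_p}(g_p)\Big)\Big(\prod_{p=1}^{n}\cR_{i_p,k_p}\Big)d ,
\]
where, by Theorem \ref{action}, $\mu^{(p)}_{k_p}(g_p)=u_p^{\theta-k_p}v_p^{k_p}$ with $0\le k_p\le\theta$ when $i_p=0$, and $\mu^{(p)}_{k_p}(g_p)=u_p^{\theta-2-k_p}v_p^{k_p}$ with $0\le k_p\le\theta-2$ when $i_p=1$ (the factor $\Delta$ in (\ref{i=1}) being $1$ on $SU(2)$). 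The induction rests on two commutation facts, checked directly from the definitions: $\cR_{i_p,k}$ commutes with $\cR_{i_q,k'}$, and, for $p\ne q$, with $\rho_q(g_q)$, since each of these operations alters only its own index position. This allows one to push every $\rho_q(g_q)$ inwards until it is applied to $d$ itself, which is exactly the case Theorem \ref{action} covers; the hypothesis that the index set contains at least two $1$'s is what makes that theorem, and hence this step, legitimate.

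Next I would insert this into (\ref{Idef}) and expand $|\,\cdot\,|^2$ as a double sum over multi-indices $(k_1,\ldots,k_n)$ and $(k'_1,\ldots,k'_n)$. Because $\prod_p\mu^{(p)}_{k_p}(g_p)$ factorises into functions each depending on a single $g_p$, the integral over $SU(2)^n$ factorises and everything reduces to evaluating, for each $p$,
\[
\int_{SU(2)}\mu^{(p)}_{k_p}(g)\,\overline{\mu^{(p)}_{k'_p}(g)}\;dg .
\]
Two facts dispose of this. First, under right multiplication of $g$ by $\mathrm{diag}(e^{\ci\varphi},e^{-\ci\varphi})$ the monomial $u^{a}v^{b}$ acquires the phase $e^{\ci(a-b)\varphi}$, so monomials of the same total degree but different $k_p$ lie in distinct weight spaces and, by invariance of Haar measure, the integral vanishes unless $k_p=k'_p$. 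Second, parametrising $SU(2)$ by its first row $(u,v)$, i.e.\ as the unit sphere in $\bbC^2$ — for which, under normalised Haar measure, $|u|^2$ is uniform on $[0,1]$ with the two phases independent and uniform — yields
\[
\int_{SU(2)}|u^{m-k}v^{k}|^2\,dg=\int_0^1 t^{m-k}(1-t)^{k}\,dt=\frac{(m-k)!\,k!}{(m+1)!}=\frac{1}{m+1}\binom{m}{k}^{-1},
\]
with $m=\theta$ if $i_p=0$ and $m=\theta-2$ if $i_p=1$. (Equivalently, by (\ref{rep}) the $\cR_{i_p,k}d$ span an irreducible $SU(2)$-module with the $u^{m-k}v^{k}$ as its matrix coefficients in a weight basis, so this is Schur orthogonality with the weight-basis normalisation.)

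Finally, only the diagonal $k_p=k'_p$ for all $p$ survives, the off-diagonal terms dropping out by the orthogonality just noted, so the integral in (\ref{Idef}) equals $\sum_{k_1,\ldots,k_n}\big(\prod_p\tfrac{1}{m_p+1}\binom{m_p}{k_p}^{-1}\big)\,\big|\prod_p\cR_{i_p,k_p}d\big|^2$, where $m_p=\theta$ for the $n-\theta$ positions with $i_p=0$ and $m_p=\theta-2$ for the $\theta$ positions with $i_p=1$. Hence $\prod_p(m_p+1)=(\theta+1)^{n-\theta}(\theta-1)^{\theta}=\gamma_{n,\theta}$, so multiplying by $\gamma_{n,\theta}$ cancels precisely these dimension factors and leaves $\sum_{k_1,\ldots,k_n}\big(\prod_p\binom{m_p}{k_p}^{-1}\big)\,\big|\prod_p\cR_{i_p,k_p}d\big|^2$, which is (\ref{formula}); as a check, for two qubits this collapses to $I_{11}=|d_{11}|^2$, agreeing with (\ref{2inv}). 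I expect the delicate step to be the product expansion in the first display — since Theorem \ref{action} describes only $\rho_i$ acting on $d$, not on the intermediate polynomials, one has to lean on the commutation relations to reduce to that case; the rest is the single-group $SU(2)$ integral, the vanishing of the off-diagonal terms, and the immediate arithmetic identifying $\prod_p(m_p+1)$ with $\gamma_{n,\theta}$.
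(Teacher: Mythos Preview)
Your proof is correct and follows essentially the same strategy as the paper: expand $\rho_1(g_1)\cdots\rho_n(g_n)d$ via Theorem \ref{action}, factorise the $SU(2)^n$ integral, kill the cross-terms by weight orthogonality, and cancel $\gamma_{n,\theta}$ against the dimension factors. The only differences are cosmetic: you evaluate $\int|u|^{2p}|v|^{2q}\,dg$ directly as a Beta integral on the $3$-sphere, whereas the paper invokes Schur's lemma for the symmetric representation of $SU(2)$; and you are more explicit than the paper about the commutation argument needed to justify the product expansion across several index positions, a point the paper leaves implicit.
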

\begin{proof}
  Combining (\ref{Idef}) and Theorem \ref{action} we get terms from
  $\int |\rho_i(g)d|^2dg$ such as $\int |u|^{2(\theta-k)}|v|^{2k}dg
  |\cR_{i,k}d|^2$, and the integral can be calculated by using
  Schur's lemma:
\begin{align}\label{schur1}
\binom{p+q}{p} \int |u|^{2p} |v|^{2q} dg&=\bra{\psi_{p,q}}\int g^{\otimes (p+q)} \proj{0}^{\otimes (p+q)} (g^\dagger)^{\otimes (p+q)} dg \ket{\psi_{p,q}}\\
&=\dim \Sym^{p+q}(\complex^2)^{-1}\bra{\psi_{p,q}}P_{Sym} \ket{\psi_{p,q}}\\
&=\dim \Sym^{p+q}(\complex^2)^{-1}\\
\label{schur2}&=(p+q+1)^{-1}.
\end{align}
Here $P_{Sym}$ is the projector onto the symmetric representation, and
$\ket{\psi_{p,q}}$ is the normalised weight vector
$\binom{p+q}{p}^{-1}\sum_\sigma \ket{i_{\sigma(1)} \ldots
  i_{\sigma(p+q)}}$, with $i_1=\ldots =i_p=0$, $i_{p+1}= \ldots
=i_{p+q}=1$, and with the sum taken over all permutations in
$S_{p+q}$. There are $n-\theta$ indices $i_p$ that are zero, where
$(p+q+1)=\theta+1$, and $\theta$ indices $i_p$ that are 1, where
$(p+q+1)=\theta-1$. These terms therefore cancel the constant
$\gamma_{n,\theta}=(\theta+1)^{n-\theta}(\theta-1)^\theta$ in
(\ref{Idef}), and the $\alpha^{i_p}_{k_p}$'s come from the factor
$\binom{p+q}{p}$ in (\ref{schur1}).

There are also cross-terms in $\int |\rho_i(g)d|^2dg$, but these have
the form $\int u^{(\theta-k)}v^k \overline{u}^{(\theta-j)}\overline{v}^j dg
  \left( \cR_{i,k}d \overline{\cR_{i,j}d}\right)$, and
\begin{align}\label{schur3}
\int u^{(\theta-k)}v^k \overline{u}^{(\theta-j)}\overline{v}^j dg
=\langle \psi_{\theta-k,k} \vert \psi_{\theta-j,j} \rangle,
\end{align}
which vanishes for $j \ne k$ since distinct weight vectors are orthogonal.
\end{proof}

Next we show that the invariants are algebraically independent. This
means there can be no non-trivial polynomial relations between
them. In the case of our cumulant-derived invariants, we can quickly
eliminate certain types of functional dependence by using the
separability criterion, Theorem \ref{separability-criterion}. For
instance, we cannot have $I_{110}=f(I_{111},I_{101},I_{011})$ for any
function $f$ because all the invariants on the right-hand side are
zero for any $\{12\}\{3\}$-separable state, whereas $I_{110}$ can take
a range of values according to the choice of the state. But we cannot
use the separability criterion to rule out a relation of the form
$I_{111}=f(I_{011},I_{101},I_{110})$, say. It turns out, however, that
we can rule out such relations. Indeed, we have
\begin{theorem}[Algebraic independence]
A polynomial relationship between the invariants  $I_{i_1 \ldots i_n}$
\begin{align}\label{dependence}
\sum \alpha_{i_{1,1} \ldots
    i_{k,n}}I^{j_1}_{i_{1,1} \ldots i_{1,n}} \ldots I^{j_k}_{i_{k,1} \ldots i_{k,n}} =0,
\end{align}
can only hold if each $\alpha_{i_{1,1} \ldots i_{k,n}}=0$.
\end{theorem}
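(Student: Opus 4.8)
The plan is to find a dilation of the space of coefficients under which each invariant acquires a well-controlled leading term, and then to observe that those leading terms are already algebraically independent for an elementary reason rooted in the triangular structure of the cumulant polynomials (\ref{partitions}). It is convenient to label the invariants $I_1,\dots,I_m$ with $I_1=I_{10^{n-1}}=\braket{\psi}{\psi}$ and $I_2,\dots,I_m$ the $I_{i_1\dots i_n}$ with $\theta\ge2$; here $m=2^n-n$, and to a relation $\sum_{\mathbf e}c_{\mathbf e}\,I_1^{e_1}\cdots I_m^{e_m}=0$ I attach its monomials $\mathbf e=(e_1,\dots,e_m)$ (so the coefficients $c_{\mathbf e}$ are the $\alpha$'s of (\ref{dependence})).

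First I would record how the invariants behave under the (non-unitary) substitution $a_A\mapsto\lambda^{|A|}a_A$, where $A$ runs over subsets of $\{1,\dots,n\}$, $a_A$ denotes the coefficient of the basis vector with $1$'s exactly on $A$, and $\lambda$ is a scalar. From (\ref{partitions}), $d_{i_1\dots i_n}\mapsto\lambda^{\theta}d_{i_1\dots i_n}$ homogeneously ($\theta=$ number of $1$'s); and since an application of $\cR_{i,k}$ inserts exactly $k$ further $1$'s into every monomial, $\cR_{i,k}d\mapsto\lambda^{\theta+k}\cR_{i,k}d$, again homogeneously. Substituting into Theorem \ref{explicit}, and using $\cR_{i,0}=\id$ and $\alpha^{i_p}_{0}=1$ so that the all-$k_p=0$ term of (\ref{formula}) is precisely $|d|^2$, I get for $\theta\ge2$
\[
I_{i_1\dots i_n}(\lambda\cdot a)=\lambda^{2\theta}\bigl(|d_{i_1\dots i_n}|^{2}+O(\lambda^{2})\bigr),
\]
while $I_1=\braket{\psi}{\psi}=\sum_A|a_A|^2$ gives $I_1(\lambda\cdot a)=|a_\emptyset|^2+O(\lambda^2)$.

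Next I would take a putative relation $\sum_{\mathbf e}c_{\mathbf e}I_1^{e_1}\cdots I_m^{e_m}\equiv0$, substitute $a\mapsto\lambda\cdot a$, and compare powers of $\lambda$. Writing $\theta_j$ for the number of $1$'s in the index set $S_j$ of $I_j$ ($j\ge2$) and $D(\mathbf e)=2\sum_{j\ge2}\theta_j e_j$, the monomial $\mathbf e$ contributes to the dilated relation with lowest $\lambda$-order $\lambda^{D(\mathbf e)}$ and coefficient $c_{\mathbf e}\,(|a_\emptyset|^2)^{e_1}\prod_{j\ge2}(|d_{S_j}|^2)^{e_j}$. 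Extracting the smallest occurring power of $\lambda$ and inducting on $D$, I obtain $\sum_{D(\mathbf e)=D}c_{\mathbf e}\,(|a_\emptyset|^2)^{e_1}\prod_{j\ge2}(|d_{S_j}|^2)^{e_j}=0$ for each $D$. It then suffices to show that $|a_\emptyset|^2$ and the $m-1$ polynomials $|d_{S_j}|^2$ are algebraically independent, for then distinct $\mathbf e$ give distinct monomials in these $m$ quantities and every $c_{\mathbf e}$ vanishes. That is the elementary part: by (\ref{partitions}), $d_A=a_\emptyset^{|A|-1}a_A+(\text{terms in }a_B,\ B\subsetneq A)$, so the Jacobian of $(a_A)_A\mapsto(d_A)_A$ is triangular with respect to any linear refinement of the inclusion order and has determinant a nonzero power of $a_\emptyset$; hence the $d_A$ — in particular $a_\emptyset=:d_\emptyset$ together with the $d_{S_j}$ — are algebraically independent over $\complex$. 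By polarisation one may treat each $\bar a_A$ as an independent variable, so the $d$'s and their conjugates stay algebraically independent, and passing to the products $a_\emptyset\bar a_\emptyset$ and $d_{S_j}\bar d_{S_j}$ preserves this because the map $(x_j,y_j)\mapsto(x_jy_j)$ is dominant. Since the number of these products is exactly $m$, nothing is lost.

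The main thing to get right is the homogeneity bookkeeping of the first step — checking from Theorem \ref{explicit} that the lowest-order part of each $I_{i_1\dots i_n}$ under the dilation is \emph{exactly} $|d_{i_1\dots i_n}|^2$, and that the subleading terms, whose $\lambda$-orders interleave with the leading orders of other monomials of the relation, do not interfere; organising the extraction as an induction on the weight $D(\mathbf e)$ is what keeps this clean. No separate treatment of the $\theta\le1$ case is needed, since $I_1$ enters through its leading term $|a_\emptyset|^2$ on the same footing as the others.
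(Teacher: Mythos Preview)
Your argument is correct. It shares with the paper's proof the overall strategy of introducing a scaling parameter, extracting leading terms, and reducing to the algebraic independence of simpler quantities, but the execution differs in a pleasant way. The paper perturbs around $\ket{0\cdots0}$ by taking all $a_A$ with $A\neq\emptyset$ uniformly small; the leading term of $I_{i_1\ldots i_n}$ is then \emph{linear} in the variables $x_A=|a_A|^2$ (with contributions from the $\cR_{i,1}$ terms in Theorem~\ref{explicit} as well as from $|d|^2$), and the proof concludes by exhibiting an explicit triangular matrix relating the $I$'s to the $x_A$'s. Your graded dilation $a_A\mapsto\lambda^{|A|}a_A$ instead makes each $d_{i_1\ldots i_n}$ exactly homogeneous, so the leading term of $I_{i_1\ldots i_n}$ is precisely $|d_{i_1\ldots i_n}|^2$ with no $\cR$-corrections; you then push the triangularity one level down, into the Jacobian of $(a_A)\mapsto(d_A)$ coming from (\ref{partitions}). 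The paper's version is slightly more hands-on (one sees the matrix $A$), while yours is a bit cleaner conceptually, since the leading term is the canonical object $|d|^2$ and the induction on the weight $D(\mathbf e)$ makes the interference issue transparent.
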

\begin{proof}
  Consider the neighbourhood of the fully separable state $\ket{0
    \ldots 0}$, where the $a_{i_1 \ldots i_n}$'s with not all $i_j=0$
  are small. From (\ref{formula}) and (\ref{partitions}) we see that
  the expansion of the invariants in lowest degree terms in these
  small variables only has contributions from the term $a_{j_1 \ldots
    j_n}a^{\theta-1}_{0 \ldots 0}$ in $d_{j_1 \ldots j_n}$, and from
  the corresponding term where operators $\cR_{i,1}$ applied to
  positions where the index $j_i=0$. Thus, assuming that the indices
  $i_1, \ldots , i_n$ include at least two 1's, we have
  \begin{align}\label{map}
  I_{i_1 \ldots i_n} \approx x^{\theta-1}_{0 \ldots 0}\sum^1_{i_{t_1}=0} \ldots \sum^1_{i_{t_{n-\theta}}=0} \left( \theta^{-\sum_j i_{t_j}} \right)x_{i_1 \ldots i_{t_1} \ldots i_{t_2} \ldots i_n},
  \end{align}
  where $x_{j_1 \ldots j_n}=|a_{j_1 \ldots j_n}|^2$, and ${t_1} \ldots
  {t_{n-\theta}}$ denote the positions of indices in $i_1, \ldots ,
  i_n$ that are zero. We also have the special case
\[
I_{10 \ldots 0}=\sum^1_{i_1=0} \ldots \sum^1_{i_n=0} x_{i_1 \ldots i_n}.
\]
Let us write $I_1=I_{10 \ldots 0}-x_{0 \ldots 0}$, and
$x_1=x_{10\ldots 0}+x_{010 \ldots 0} + \ldots x_{0\ldots 01}$. Then
the lowest degree expansion defines an invertible map $A$. For
instance, for $n=3$ this map is given by
\[
A=\bordermatrix{
       & I_1 &I_{110} & I_{101} & I_{011} & I_{111}\cr
x_1    & 1   & 0     & 0   & 0& 0  \cr
x_{110} & 1   & x_{000}& 0 & 0& 0  \cr 
x_{101} & 1   & 0     & x_{000}& 0& 0  \cr
x_{011} & 1   & 0     & 0& x_{000}& 0  \cr
x_{111} & 1   & \frac{1}{2}x_{000} & \frac{1}{2}x_{000}& \frac{1}{2}x_{000}& x^2_{000} \cr
}
\]
This extends to a map $A^{poly}$ from polynomials in the $I$'s to
polynomials in the $x$'s. Since $A$ is invertible, the same is true of
$A^{poly}$. If we now take the lowest degree terms in
(\ref{dependence}), they map to a polynomial in the $x$'s whose
coefficients must be zero, since the $x$'s are independent
variables. Applying the inverse of $A^{poly}$, we deduce that the
corresponding coefficients in (\ref{dependence}) are zero. Looking at
the next highest degree terms in (\ref{dependence}), we again deduce
that their coefficients are zero. And so on for the whole
$I$-polynomial.
\end{proof}

We can therefore conclude that we get $2^n-n$ algebraically
independent invariants from cumulants, namely one invariant, the
squared-amplitude, from the first-degree cumulants, and $2^n-n-1$ from
the second and higher degree cumulants. On the other hand, $D(n)$, the
dimension of the space of orbits, is $2^{n+1}-(3n+1)$ for $n \ge 3$,
\cite{Carteret00}. Thus asymptotically the number of cumulant-based
invariants is $\frac{1}{2}D(n)$.

Geometrically speaking, algebraically independent invariants give
information about independent directions in orbit space, and $D(n)$ is
the number of invariants in a maximal algebraically independent
set. However, a maximal algebraically independent set will generally
not suffice to characterize orbits fully: there will be a finite set
of states that take the same values on all of the invariants in such a
set, and additional invariants are needed to distinguish amongst these
states. Thus a fundamental set, while it certainly contains a maximal
algebraically independent set, will contain additional invariants, and
generally contains a large number of such invariants.

\begin{example}\label{example3}
Consider the case of three qubits. From
(\ref{formula}) we find
\begin{align}
  I_{110}=|\cR_{3,0}d_{110}|^2+\frac{1}{2}|\cR_{3,1}d_{110}|^2+|\cR_{3,2}d_{110}|^2,
\end{align}
which is a polynomial invariant of degree 4. The terms are given
explicitly by (\ref{R0}), (\ref{R1}) and (\ref{R2}). There are two
other 4th degree invariants, $I_{101}$ and $I_{011}$ obtained by
permuting the indices. We also have
\begin{align}
  I_{111}=\sum_{i,j,k=0}^1 |\cR_{1,i}\cR_{2,j}\cR_{3,k}d_{111}|^2,
\end{align}
which is of degree 6. The theorem excludes the case where there is a
single 1 in the index set, but in that case Definition \ref{Idef} just
gives the squared amplitude $\braket{\psi}{\psi}$, irrespective of the
position of the 1.

So we get altogether five cumulant-based invariants. We can relate
these to well-known sets of three-qubit invariants
\cite{LindenPopescu98,Sudbery01,Luque07}. For instance, in the list
given in Luque et. al.  \cite{Luque07} we can make the following
identifications:
\begin{align}
\label{100} I_{100} &\leftrightarrow A_{111},\\
\label{110} I_{110} &\leftrightarrow B_{002},\\
\label{111} I_{111} &\leftrightarrow C_{111},
\end{align}
including permutations of (\ref{110}). Sudbery's list \cite{Sudbery01}
includes $J_1=\braket{\psi}{\psi}$, the three fourth-degree
polynomials $J_2=\tr \rho_3^2$, $J_3=\tr \rho_2^2$, $J_4=\tr
\rho_1^2$, and the sixth-degree polynomial $J_5=3\tr\left[(\rho_1
  \otimes \rho_2)\rho_{12}\right]-\tr(\rho^3_1)-\tr(\rho_2^3)$, where
$\rho_1=\tr_{23}\proj{\psi}$, $\rho_2=\tr_{13}\proj{\psi}$,
$\rho_3=\tr_{12}\proj{\psi}$, and $\rho_{12}=\tr_3\proj{\psi}$. These
are related to the cumulant-based invariants by
\begin{align*}
I_{100}&=I_{010}=I_{001}=J_1,\\
4I_{110}&=J_1^2+J_2-J_3-J_4,\\
4I_{101}&=J_1^2+J_3-J_2-J_4,\\
4I_{011}&=J_1^2+J_4-J_2-J_3,\\
6I_{111}&=5J_1^3-3J_1\left(J_2+J_3+J_4\right)+4J_5.
\end{align*} 

Since $D(3)=6$, one extra invariant is needed to make a maximal
  algebraically independent set. In Sudbery's list \cite{Sudbery01}
  this is $J_6=|$Det$(\ket{\psi})|^2$, where 
\begin{align}\label{Det}
\mbox{Det}(\ket{\psi})=
  a_{ijk}a_{i^\prime j^\prime m}a_{npk^\prime}a_{n^\prime p^\prime
    m^\prime}\epsilon_{i i^\prime}\epsilon_{j j^\prime}\epsilon_{k
    k^\prime}\epsilon_{m m^\prime}\epsilon_{n n^\prime}\epsilon_{p
    p^\prime}
\end{align}
is the hyperdeterminant \cite{Miyake}. $J_6$ is closely related to the
3-tangle \cite{Coffman00}, $\tau=2|$Det$(\ket{\psi})|$.

To make a fundamental set, we need to add one more invariant to this
maximal algebraically independent set, and Grassl \cite{Grassl02} has
identified a suitable 12th degree polynomial. This is a complex
invariant, and is necessary for distinguishing a state from its
complex conjugate \cite{Acinetal00}.

\end{example}

\section{The hierarchical structure of invariants}

The 3-qubit invariants $I_{011}, I_{101}$ and $I_{110}$ are closely
related to the 2-qubit invariant $I_{11}$. We obtain $I_{110}$ from
$I_{11}$ by adding a `0' index in the third position to $I_{11}$ and
twirling. More generally, we wish to make an $(n+1)$-qubit invariant
from an $n$-qubit invariant $J$. Phase invariance requires every term
in $J$ to be a product of an equal number of $a$'s and $\bar a$'s.
Given a monomial $m=\prod a_{j_1 \ldots j_n}\bar{a}_{k_1 \ldots k_n}$
in $J$, of degree $\theta$ in the $a$'s and $\bar a$'s, we define
\begin{align}\label{twirl-lift}
  m_0=(\theta+1)\int_{SU(2)} \rho_i(g)\prod a_{j_1 \ldots 0_i \ldots
    j_n}\bar{a}_{k_1 \ldots 0_i \ldots k_n} dg.
\end{align}
We refer to this process as {\em lifting} and use a subscript `0' to
denote a lifted index. Table \ref{lifts} shows lifts of $I_{11}$ and
$I_{111}$ up to five qubits. We say that an invariant together with
its lifts constitute a {\em family}.

Lifting can also be understood in terms of tracing-out operations. Let
us write a mixed state of $n$ qubits as
\begin{align}\label{mixed}
\rho=\sum_{i_1, \ldots i_n; j_1, \ldots j_n}a_{i_1, \ldots i_n}^{j_1, \ldots j_n}\bigotimes_{r=1}^n \ket{i_r}\bra{j_r}
\end{align}
The following rules, applied to each monomial, enable one to
interconvert between any pure state invariant $J$ of degree $\theta$
in the $a$'s and $\bar a$'s, and a mixed state invariant $\hat J$ of
degree $\theta$ in the coefficients of (\ref{mixed}):
\begin{align}
\label{JtohatJ}  &J \to {\hat J}: \ \ \ \prod_{r=1}^\theta a_{i^r_1, \ldots i^r_n}\prod_{s=1}^\theta \bar a_{j^s_1, \ldots j^s_n} \longrightarrow \frac{1}{\theta!}\sum_{\sigma \in S_\theta} \prod_{r=1}^\theta a_{i^r_1, \ldots i^r_n}^{j^{\sigma(r)}_1, \ldots j^{\sigma(r)}_n}\\
  &{\hat J} \to J:\ \ \ \prod_{r=1}^\theta a_{i^r_1, \ldots , i^r_n}^{j^r_1,
    \ldots , j^r_n} \longrightarrow \prod_{r=1}^\theta a_{i^r_1,
    \ldots , i^r_n} \prod_{r=1}^\theta {\bar a}_{j^r_1, \ldots ,
    j^r_n}
\end{align}
The map (\ref{JtohatJ}) is given in \cite{LLW04}, but without the
  averaging over all permutations of pairings of upper and lower
  indices, which is essential for the following: 
\begin{proposition}
  If $\rho_i(g)J=J$ then $\rho_i(g) {\hat J}\rho_i(g)^\dagger=\hat
  J$. Thus given any pure state invariant $J$, ${\hat J}$ is a mixed
  state invariant.
\end{proposition}
\begin{proof}
The statement is equivalent to the commuting of the following diagram:
\[\begin{array}{rcc}J&\xrightarrow{\rho_i(g)} & J\\
\downarrow && \downarrow\\
{\hat J}&\xrightarrow{\Psi(\rho_i(g))}& {\hat J}
\end{array} \]
where $\Psi(x){\hat J}=x{\hat J}x^\dagger$.
\end{proof}
The following is immediate:
\begin{proposition}\label{purestate}
\[
{\hat J}(\proj{\psi})=J(\ket{\psi}).
\]
\end{proposition}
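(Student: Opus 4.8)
The plan is to verify the identity monomial by monomial, exploiting the fact that the coefficients of a pure-state density matrix factorise. Writing $\ket{\psi}=\sum a_{i_1 \ldots i_n}\ket{i_1 \ldots i_n}$, the operator $\proj{\psi}$ has, in the expansion (\ref{mixed}), coefficients $a_{i_1, \ldots, i_n}^{j_1, \ldots, j_n}=a_{i_1 \ldots i_n}\bar a_{j_1 \ldots j_n}$. Since $\hat J$ is obtained from $J$ by applying the substitution rule (\ref{JtohatJ}) to each monomial, it suffices to show that, after inserting these factorised coefficients, the image of each monomial of $J$ collapses back to that same monomial evaluated at $\ket{\psi}$, and then sum over monomials.

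First I would fix a generic monomial $\prod_{r=1}^\theta a_{i^r_1, \ldots i^r_n}\prod_{s=1}^\theta \bar a_{j^s_1, \ldots j^s_n}$ of $J$; phase invariance of $J$ guarantees an equal number $\theta$ of barred and unbarred factors, so the symmetrisation in (\ref{JtohatJ}) makes sense. Applying the rule $J \to \hat J$ gives $\frac{1}{\theta!}\sum_{\sigma \in S_\theta}\prod_{r=1}^\theta a_{i^r_1, \ldots i^r_n}^{j^{\sigma(r)}_1, \ldots j^{\sigma(r)}_n}$. I then substitute the factorised density-matrix coefficients, so that each summand becomes $\prod_{r=1}^\theta a_{i^r_1, \ldots i^r_n}\,\bar a_{j^{\sigma(r)}_1, \ldots j^{\sigma(r)}_n}$.

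The key observation is that the unbarred part $\prod_{r=1}^\theta a_{i^r_1, \ldots i^r_n}$ does not involve $\sigma$, while the barred part $\prod_{r=1}^\theta \bar a_{j^{\sigma(r)}_1, \ldots j^{\sigma(r)}_n}$ is merely a reordering of $\prod_{s=1}^\theta \bar a_{j^s_1, \ldots j^s_n}$ and hence, by commutativity of multiplication, independent of $\sigma$. Thus all $\theta!$ terms in the sum coincide, the prefactor $\frac{1}{\theta!}$ is cancelled, and the image of the monomial is exactly $\prod_{r=1}^\theta a_{i^r_1, \ldots i^r_n}\prod_{s=1}^\theta \bar a_{j^s_1, \ldots j^s_n}$, i.e.\ the original monomial of $J$ evaluated at $\ket{\psi}$. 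Summing over all monomials yields $\hat J(\proj{\psi})=J(\ket{\psi})$.

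There is essentially no obstacle here; the statement really is immediate, as claimed. The only point worth being careful about is the role of the symmetrisation in (\ref{JtohatJ}): it is precisely what makes $\hat J$ well defined as a function of the $a_{i_1, \ldots, i_n}^{j_1, \ldots, j_n}$ (two different orderings of the factors of a monomial of $J$ must map to the same element), and it is also what guarantees the round trip $J \to \hat J \to J$ is consistent. For the present proposition, however, only the forward substitution on a pure state is needed, and that is the short computation above.
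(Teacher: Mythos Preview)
Your proof is correct and is exactly the verification the paper has in mind; the paper simply declares the result immediate and gives no argument, and your monomial-by-monomial computation is the natural unpacking of that claim.
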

\begin{proposition}\label{tracedef}
  If $\ket{\psi}$ is an $n$-qubit state and $J_{p_1 \ldots p_k}$ a polynomial
  invariant, then
\[
J_{ p_1 \ldots p_k 0^{n-k}}(\ket{\psi})={\hat J}_{p_1 \ldots p_k}(\tr_{n-k} \proj{\psi}),
\]
where $0^{n-k}$ means a string of $(n-k)$ `0' indices, and $\tr_{n-k}$
means tracing out the last $(n-k)$ systems.
\end{proposition}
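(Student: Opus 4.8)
The plan is to reduce the statement to a direct comparison of two monomial expansions by unwinding the three pieces of machinery involved: the lifting-by-`0'-indices operation (\ref{twirl-lift}), the $J \leftrightarrow \hat J$ dictionary (\ref{JtohatJ}), and the formula $\rho_{n-k}\proj{\psi}$ for a partial trace. First I would reduce to the case $n-k=1$: both lifting and tracing out one system are iterable, so by an obvious induction it suffices to show $J_{p_1 \ldots p_k 0}(\ket{\psi}) = \hat J_{p_1 \ldots p_k}(\tr_{k+1}\proj{\psi})$ for a $(k+1)$-qubit $\ket{\psi}$, where here $J_{p_1 \ldots p_k 0}$ is the $(k+1)$-qubit invariant obtained from the $k$-qubit $J_{p_1 \ldots p_k}$ by a single lift via (\ref{twirl-lift}). (One should note, and it is worth a sentence in the proof, that the ``lifting'' of an invariant in this section and the ``adding a $0$ index'' that appears in the subscript notation $I_{i_1 \ldots i_n 0^{n-k}}$ agree — i.e. that lifting $I_{i_1 \ldots i_k}$ really does produce $I_{i_1 \ldots i_k 0}$ as defined by (\ref{Idef}); this is consistent with the $I_{10^{n-1}} = \braket{\psi}{\psi}$ computation but should be acknowledged.)

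Next, work monomial by monomial, which is legitimate since every operation in sight is linear on monomials. Fix a monomial $m = \prod_{r=1}^\theta a_{i^r_1 \ldots i^r_k} \prod_{s=1}^\theta \bar a_{j^s_1 \ldots j^s_k}$ of $J_{p_1 \ldots p_k}$. On one side, $\hat J_{p_1\ldots p_k}$ applied to $\sigma := \tr_{k+1}\proj{\psi}$ evaluates the image of $m$ under (\ref{JtohatJ}), namely $\frac{1}{\theta!}\sum_{\sigma\in S_\theta}\prod_r (\sigma)^{j^{\sigma(r)}_1 \ldots j^{\sigma(r)}_k}_{i^r_1 \ldots i^r_k}$, and the matrix elements of the reduced state are $(\sigma)^{J}_{I} = \sum_{t=0}^1 a_{I\,t}\,\bar a_{J\,t}$ where $a_{I\,t}$ denotes the coefficient of $\ket{i_1\ldots i_k t}$ in $\ket{\psi}$. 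Substituting, one gets a sum over the $\sigma$'s and over independent binary labels $t_1,\dots,t_\theta$ attached to the $a$'s (and matched labels on the $\bar a$'s). On the other side, the lift (\ref{twirl-lift}) applied to $m$ is $(\theta+1)\int_{SU(2)}\rho_{k+1}(g)\prod_r a_{i^r_1\ldots i^r_k 0}\prod_s \bar a_{j^s_1 \ldots j^s_k 0}\,dg$; using the $SU(2)$-action (\ref{a-action}) on the $(k+1)$st index and Schur's lemma exactly as in the proof of Theorem \ref{explicit} — the same projector-onto-$\Sym^\theta$ computation that produced (\ref{schur1})–(\ref{schur2}) — the Haar integral collapses to $\frac{1}{\theta+1}\cdot\frac{1}{\theta!}\sum_{\sigma}\prod_r (\text{pairing of }t_r\text{ with }t_{\sigma(r)})$, and the prefactor $(\theta+1)$ cancels $\dim\Sym^\theta(\mathbb C^2)^{-1}$. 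One then reads off that the two expressions are term-by-term identical, with the permutation $\sigma$ in the Haar integral playing the role of the symmetrising permutation in (\ref{JtohatJ}) and the symmetric-subspace overlap $\braket{\psi_{p,q}}{\psi_{p,q}}=1$ supplying the matching of the $t$-labels.

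The main obstacle is bookkeeping rather than anything conceptual: one must check that the combinatorial factor $(\theta+1)$ inserted in (\ref{twirl-lift}) is precisely what is needed to cancel the $(\theta+1)^{-1} = \dim\Sym^\theta(\mathbb C^2)^{-1}$ produced by the Haar average, and that the symmetrisation $\frac{1}{\theta!}\sum_\sigma$ built into the $J\to\hat J$ rule matches the sum over permutations that survives in $\bra{\psi_{p,q}}P_{\Sym}\ket{\psi_{p,q}}$ after restricting to the (diagonal, $t$-labelled) sector. A clean way to package this is to observe that, for a fixed assignment of binary labels to the copies, the integrand $\rho_{k+1}(g)(\cdots)$ is a rank-one operator on $(\mathbb C^2)^{\otimes\theta}$ of the form $g^{\otimes\theta}\proj{0}^{\otimes\theta}(g^\dagger)^{\otimes\theta}$ sandwiched between the appropriate product vectors, so the Haar average is $\dim\Sym^\theta(\mathbb C^2)^{-1}P_{\Sym}$, and $\langle \text{product vector}\,|P_{\Sym}|\,\text{product vector}\rangle$ expands into exactly the normalised permutation sum appearing on the $\hat J$ side. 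With that identification the two monomial expansions coincide and the proposition follows. Propositions \ref{purestate} and \ref{tracedef} together then justify the ``lifting = tracing-out'' slogan used in the rest of the section.
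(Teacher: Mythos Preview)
Your proposal is correct and follows essentially the same route as the paper: both arguments work monomial by monomial, expand the $\hat J$ side via the partial-trace formula and the symmetrisation in (\ref{JtohatJ}), expand the lifted side via the Haar integral and Schur's lemma (the same computation as (\ref{schur1})--(\ref{schur3})), and then match combinatorial factors. The only cosmetic differences are that you make the reduction to a single lifted index explicit by induction, and you package the Schur step as $\langle\text{product}|P_{\Sym}|\text{product}\rangle$ rather than computing the individual moments $\int |u|^{2p}|v|^{2q}\,dg$; the paper instead illustrates the $\theta=2$ case concretely and remarks that the general-$\theta$ counting gives the factor $p!(\theta-p)!/\theta!$ matching $\binom{\theta}{p}^{-1}$.
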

\begin{proof}
  Let us see how this works in a simple case. The generalisation is
  then straightforward. So consider the monomial
  $m=a_{i_1i_2}a_{j_1j_2}\bar a_{k_1k_2}\bar a_{l_1l_2}$ of a 4th
  degree 2-qubit invariant $J_{p_1p_2}$.  Under the map
  (\ref{JtohatJ}) $m$ becomes
  $\widehat{m}=\frac{1}{2}\left(a_{i_1i_2}^{k_1k_2}a_{j_1j_2}^{l_1l_2}+a_{i_1i_2}^{l_1l_2}a_{j_1j_2}^{k_1k_2}
  \right)$.  The monomial $m$ lifts, by (\ref{twirl-lift}), to
\begin{align}\label{m0}
  m_0=(\theta+1)\int \rho_3(g)a_{i_1i_20}a_{j_1j_20}\bar a_{k_1k_20}\bar
  a_{l_1l_20}dg
\end{align}
in $J_{p_1p_20}(\ket{\psi})$ for a 3-qubit state $\ket{\psi}$, and
$(\theta+1)=3$. We therefore wish to show that $m_0=\widehat m$, where
the coefficients $a_{i_1i_2}^{k_1k_2}$, etc., in $\widehat m$ come
from $\tr_3 \proj{\psi}$. This gives
 \begin{align}\label{mhat}
\widehat{m}&=\frac{1}{2}\left( a_{i_1i_2 0}\bar
  a_{k_1k_2 0}+a_{i_1i_2 1}\bar a_{k_1k_2 1} \right)\left(
  a_{j_1j_2 0}\bar a_{l_1l_2 0}+a_{j_1j_2 1}\bar a_{l_1 l_2 1}
\right)\\
\nonumber &+\frac{1}{2}\left( a_{i_1i_2 0}\bar a_{l_1l_2 0}+a_{i_1i_2 1}\bar
  a_{l_1l_2 1} \right)\left( a_{j_1j_2 0}\bar
  a_{k_1k_2 0}+a_{j_1j_2 1}\bar a_{k_1k_2 1} \right),
\end{align}
and from (\ref{m0}) we have
\begin{align}\label{m0expand}
m_0=\int \left(ua_{i_1i_2 0}+va_{i_1i_2 1}\right)\left(ua_{j_1j_2 0}+va_{j_1j_2 1}\right)\overline{\left(ua_{k_1k_2 0}+va_{k_1k_2 1}\right)}\ \ \overline{\left(ua_{l_1l_2 0}+va_{l_1l_2 1}\right)}.
dg
\end{align}
Comparing (\ref{mhat}) and (\ref{m0expand}), and using (\ref{schur1}),
(\ref{schur2}) and (\ref{schur3}), we see that $m_0=\widehat m$. This
argument is unchanged when $J$ is an $n$-qubit invariants (we just
permute notationally more cumbersome blocks of indices). When the
degree, $2\theta$, is arbitrary, a term in $\widehat{m}$ that has $p$
0's in the $a$'s in the lifted index position occurs $p!(\theta-p)!$
times in the generalisation of (\ref{mhat}). With the normlising
factor $1/\theta!$ from (\ref{JtohatJ}), we obtain the coefficient
$\int|u|^{2p}|v|^{2\theta-2p}$ of the corresponding term in $m_0$, as
given by (\ref{m0expand}).
\end{proof}
Proposition \ref{tracedef} provides an alternative definition of
lifting via tracing-out of subsystems. Combining this Proposition with
Proposition \ref{purestate} we get

\begin{corollary}
If $\ket{\psi}=\ket{\mu}_k\otimes \ket{\nu}_{n-k}$ then
\[
J_{i_1 \ldots i_k0^{n-k}}(\ket{\psi})=J_{i_1 \ldots i_k}(\ket{\mu}).
\]
\end{corollary}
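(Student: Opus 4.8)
The plan is to obtain the identity by chaining Proposition \ref{tracedef} with Proposition \ref{purestate}. First I would invoke Proposition \ref{tracedef}, with $i_1 \ldots i_k$ playing the role of $p_1 \ldots p_k$, to write
\[
J_{i_1 \ldots i_k 0^{n-k}}(\ket{\psi}) = {\hat J}_{i_1 \ldots i_k}(\tr_{n-k}\proj{\psi}).
\]
This reduces the problem to identifying the reduced density matrix obtained by tracing out the last $n-k$ systems.

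Second, I would use the factorisation hypothesis. Since $\proj{\psi}=\proj{\mu}\otimes\proj{\nu}$, tracing out the last $n-k$ systems gives $\tr_{n-k}\proj{\psi}=\braket{\nu}{\nu}\,\proj{\mu}$; taking $\ket{\nu}$ to be normalised (which one may arrange by absorbing its norm into $\ket{\mu}$, or by invoking the homogeneity of ${\hat J}$ in the coefficients of $\rho$) this is just $\proj{\mu}$. Hence ${\hat J}_{i_1 \ldots i_k}(\tr_{n-k}\proj{\psi})={\hat J}_{i_1 \ldots i_k}(\proj{\mu})$. Third, I would apply Proposition \ref{purestate} to the $k$-qubit pure state $\ket{\mu}$, which gives ${\hat J}_{i_1 \ldots i_k}(\proj{\mu})=J_{i_1 \ldots i_k}(\ket{\mu})$. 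Concatenating the three equalities yields the corollary.

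I do not expect any real obstacle: the statement is a formal consequence of the two preceding propositions together with the elementary behaviour of partial traces on product states. The one point deserving a remark is the normalisation of the discarded factor $\ket{\nu}$ --- if it is not a unit vector one picks up a scalar $\braket{\nu}{\nu}^\theta$, with $\theta$ the degree of $J$ in the $a$'s --- so the identity should be read either for normalised $\ket{\nu}$ or modulo this factor. One could alternatively give a direct proof bypassing the mixed-state formalism: the lift $m_0$ of a monomial $m$ of $J$, as defined in (\ref{twirl-lift}), involves only the coefficients $a_{j_1 \ldots j_k 0}$ and $a_{j_1 \ldots j_k 1}$, which for $\ket{\psi}=\ket{\mu}\otimes\ket{\nu}$ factor as $\nu_0$ (resp.\ $\nu_1$) times the $k$-qubit coefficient of $\ket{\mu}$; the Haar integral over $SU(2)$ in (\ref{twirl-lift}), evaluated via (\ref{schur1})--(\ref{schur2}), then collapses the $\nu$-dependence up to $\braket{\nu}{\nu}^\theta$ and returns $m$ evaluated on $\ket{\mu}$. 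But the three-line argument through Propositions \ref{tracedef} and \ref{purestate} is the cleaner one and is what I would present.
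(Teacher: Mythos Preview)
Your argument is correct and matches the paper's own: the corollary is stated immediately after Proposition \ref{tracedef} as a consequence of combining it with Proposition \ref{purestate}, which is exactly the chain you propose. Your remark about the normalisation of $\ket{\nu}$ is a nice extra that the paper leaves implicit.
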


The third, equivalent, definition of lifting comes from a technique
called transvection, invented by Cayley in the 19th century heyday of
invariant theory. Transvection is a useful device for generating
invariants; understanding it will enable us to interpret the 4-qubit
invariants given in \cite{Luque07} in the language used here.

The {\em fundamental form} for a $n$-qubit state is the polynomial $f$
in the a's and the variables $x^{(j)}_0$, $x^{(j)}_1$, for $1 \le j
\le n$ given by
\begin{align}
f=\sum_{i_1, \ldots , i_n} a_{i_1 \ldots i_n}x^{(1)}_{i_1} \ldots x^{(n)}_{i_n}
\end{align}
If we let $g \in SU(n)$ act on the $i$th index of $a$'s by the usual
transpose action (\ref{a-action}) and upon the $x^{(i)}$'s via the
inverse, $g^\dagger$, then one easily checks that $\rho(g)f=f$. More
generally, a {\em covariant} of weight $q$ is a polynomial $p$ in the
$a$'s and $x$'s satisfying
\begin{align}
\rho_i(g)p=\Delta_i^qp.
\end{align}
Given two covariants, $p$ and $q$, we define $\langle p,q \rangle$, by
$\langle \mu(a)|\nu(a) \rangle =\mu(a)
\overline{\nu(a)}$ for expressions in the $a$'s, and, for each $i$,
\begin{align}\label{inner-product}
\langle (x^{(i)}_0)^{p_i}(x^{(i)}_1)^{q_i}| (x^{(i)}_0)^{p^\prime_i}(x^{(i)}_1)^{q^\prime_i}\rangle=\delta_{p_i,p^\prime_i}\delta_{q_i,q^\prime_i}p_i!q_i!.
\end{align}
This is sometimes called the {\em derivative inner product} because we
obtain it by setting $x^{(i)}_j$ on the lefthand side to
$\partial/\partial x^{(i)}_j$ and applying these derivatives to
(unchanged) $x$'s on the righthand side.

\begin{table}
  \caption{Some 2, 3, 4, and 5-qubit invariants related by lifting.
\label{lifts}}
\begin{center}
\begin{tabular}{ |c|c|c|c|}
\hline
{\bf 2 qubits} & {\bf 3 qubits} & {\bf 4 qubits} & {\bf 5 qubits}\\
\hline
\hline
$I_{10}$ & $I_{100}$ & $I_{1000}$ & $I_{10000}$\\ 
\hline
$I_{11}$ & $I_{110}, I_{101}, I_{011}$ & $I_{1100}, I_{1010},
I_{0110}, I_{1001}, I_{0101}, I_{0011}$ & $I_{11000}, I_{10100}$, etc. (10) \\
\hline
- & - & $G_{1111}$ & $G_{11110}, G_{11101}, G_{11011}, G_{10111}, G_{01111}$\\
\hline
- &$I_{111}$ & $I_{1110}, I_{1101}, I_{1011}, I_{0111}$ & $I_{11100}, I_{11010}$, etc. (10) \\
\hline
- & $H_{222}$ & $H_{2220}, H_{2202}, H_{2022}, H_{0222}$ &
$H_{22200}, H_{22020}$, etc.  (10)\\
\hline
\end{tabular}
\end{center}
\end{table}

\begin{table}
\caption{Sixteen of the nineteen four-qubit invariants in \cite{Luque07} written in transvectant notation.
\label{transvectants}}
\begin{center}
\begin{tabular}{ |c|c|c|c|}
\hline
Invariants & Corresponding covariants & number & degree\\
\hline
\hline
$I_{1000}$ & $f$&1 &2\\
\hline
$G_{1111}$ &$(f,f)^{1111}$&1&4\\ 
\hline
$I_{1100}$ & $(f,f)^{1100}$&6&4\\
\hline
$I_{1110}$ & $(f,(f,f)^{1100})^{0010}$&4&6\\
\hline
$H_{2220}$ & $(f,(f,(f,f)^{1100})^{0010})^{1110}$ &4&8\\
\hline
\end{tabular}
\end{center}
\end{table}

If $p$ is a covariant, then $\langle p,p \rangle$ is an invariant, so
any means of generating covariants also supplies us with invariants.
Transvection is just such a means. Given two covariants,
$p(x^{(i)}_j)$, $q(y^{(i)}_j)$, define the {\em transvectant} by
\begin{align}\label{transvectant-def}
  (p,q)^{i_1 \ldots i_k}= \Omega_{i_1} \ldots
  \Omega_{i_k}(pq)\Bigm\vert_{y \to x} \mbox{ where }
  \Omega_iX=\frac{\partial}{\partial
    x^{(i)}_0}\frac{\partial}{\partial
    y^{(i)}_1}-\frac{\partial}{\partial
    x^{(i)}_1}\frac{\partial}{\partial y^{(i)}_0}.
\end{align}
The vertical bar indicates that, after applying the differential
operators $\Omega_i$, we change the $y$'s to $x$'s, so $(p,q)^{i_1
  \ldots i_k}$ is a polynomial in $a$'s and $x$'s. A classical theorem
\cite{Olver} asserts that, for any binary indices $i_1 \ldots i_k$,
$(p,q)^{i_1 \ldots i_k}$ is a covariant if $p$ and $q$ are. Starting
with the fundamental form, we can build up a wealth of covariants $p$
and derive invariants $\langle p|p \rangle$ from them (see Table
\ref{transvectants}).
\begin{example}
  For two-qubit states, $f=\sum a_{ij}x_i^{(1)}x_j^{(2)}$. Take
  $p=(f,f)^{11}$. Then $p=d_{11}$ and $\langle p|p
  \rangle=|d_{11}|^2=I_{11}$. For three-qubit states, $f=\sum
  a_{ijk}x_i^{(1)}x_j^{(2)}x_k^{(3)}$.  Take
  $\iota_{110}=(f,f)^{110}$. Applying (\ref{transvectant-def}) we get
\[
\iota_{110}=(f,f)^{110}=d_{110}(x^{(3)}_0)^2+\cR_{3,1}d_{110}(x^{(3)}_0x^{(3)}_1)+\cR_{3,2}d_{110}(x^{(3)}_1)^2.
\]
Using the derivative formula for the inner product
(\ref{inner-product}) we find that $\langle \iota_{110}|\iota_{110}
\rangle=4I_{110}$.
\end{example}
More generally, we have the following result:
\begin{theorem}\label{translate}
Let
\[
\iota_{1^k0^{n-k}}=(f,(f, \ldots (f,f)^{110 \ldots 0})^{001 \ldots 0} \ldots )^{0 \ldots 01 0^{n-k}}.
\]
Then 
\[
\langle \iota_{1^k 0^{n-k}}|\iota_{1^k 0^{n-k}} \rangle=\xi I_{1^k 0^{n-k}},
\]
where $\xi=4((k-2)!)^k(k!)^{n-k}$.
\end{theorem}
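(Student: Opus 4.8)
The plan is to prove the theorem by induction on $k$, peeling off one transvection at a time and tracking how the coefficient $\xi$ accumulates. First I would establish the base case $k=2$, which is essentially the three-qubit computation already displayed in the Example: $\iota_{110}=(f,f)^{110}$ has, as coefficients of the monomials $(x^{(3)}_0)^2$, $x^{(3)}_0x^{(3)}_1$, $(x^{(3)}_1)^2$, exactly $\cR_{3,0}d_{110}$, $\cR_{3,1}d_{110}$, $\cR_{3,2}d_{110}$, and the derivative inner product (\ref{inner-product}) weights these by $2!0!=2$, $1!1!=1$, $0!2!=2$, giving $\langle\iota_{110}|\iota_{110}\rangle=2|\cR_{3,0}d_{110}|^2+|\cR_{3,1}d_{110}|^2+2|\cR_{3,2}d_{110}|^2=4I_{110}$ by comparison with Theorem \ref{explicit}. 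Here $\xi=4((k-2)!)^k(k!)^{n-k}$ with $k=2$, $n=3$ indeed gives $4\cdot(0!)^2\cdot(2!)^1=8$ — so I must be careful: I would recompute and match the stated $\xi$ against the three-qubit case, $4((k-2)!)^k(k!)^{n-k}|_{k=2,n=3}=8$, against the factor from the example; any discrepancy is absorbed in the $\alpha$-normalisations in Theorem \ref{explicit}, so the bookkeeping of binomials versus factorials is where I would focus.

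The core of the argument is a lemma identifying the coefficients of the iterated transvectant with products of raising operators applied to $d_{1^k0^{n-k}}$. Concretely, I would show by induction that, writing $\iota_{1^k0^{n-k}}$ as a polynomial in the remaining variables $x^{(k+1)},\ldots,x^{(n)}$, the coefficient of $\prod_{p=k+1}^n (x^{(p)}_0)^{\theta-k_p}(x^{(p)}_1)^{k_p}$ is (up to an explicit multinomial factor) $\prod_{p=k+1}^n \cR_{p,k_p}\, d_{1^k0^{n-k}}$, where $\theta=k$ is the degree of $d$. This uses the structure of Theorem \ref{action}: the transvectant operator $\Omega_i$ acting on the fundamental form is precisely the vehicle by which the $SU(2)$-action at site $i$ (equations (\ref{i=0}) and (\ref{i=1})) is encoded, since the covariance of $f$ means transvecting at index $i$ extracts exactly the representation-theoretic content that $\rho_i(g)d$ carries. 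The innermost transvectant $(f,f)^{1^k0^{n-k}}$ reproduces $d_{1^k0^{n-k}}$ and its $\cR_{i,\cdot}$-images along the zero positions, exactly as in (\ref{i=0}); each successive transvection against another copy of $f$ at a new site $p\le k$ replaces $d$ by its $l_p=1$ transformation law (\ref{i=1}). I would then invoke Lemma \ref{xi-sum} to see that the would-be top terms vanish, which is what makes the counting in (\ref{formula}) come out with ranges $0$ to $\theta$ or $0$ to $\theta-2$.

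With the coefficient identification in hand, the remaining step is purely combinatorial: apply the derivative inner product (\ref{inner-product}), which multiplies the squared modulus of the coefficient of $\prod(x^{(p)}_0)^{\theta-k_p}(x^{(p)}_1)^{k_p}$ by $\prod_p (\theta-k_p)!\,k_p!$. Comparing with the weights $\bigl|\alpha^{i_p}_{k_p}\bigr|^2=\binom{\theta}{k_p}^{-2}$ or $\binom{\theta-2}{k_p}^{-2}$ in (\ref{formula}), and accounting for the multinomial factors picked up in building the transvectant, collapses everything to a single overall constant independent of the $k_p$'s; that constant is $\xi$. I expect the main obstacle to be precisely this last normalisation: making sure the multinomial coefficients generated by the iterated $\Omega$-operators, the factorials from the inner product, and the inverse-binomials $\alpha^{i_p}_{k_p}$ in Theorem \ref{explicit} all conspire so that the $k_p$-dependence cancels — the mechanism that guarantees this is Schur's lemma (as used in (\ref{schur1})--(\ref{schur2})), since both sides are really computing $\|p\|^2$ for the same covariant up to scale, but verifying the scale requires careful tracking of which binomials are inverted where. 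Once the cancellation is confirmed for general $k_p$, reading off $\xi=4((k-2)!)^k(k!)^{n-k}$ is a matter of collecting the $k$ factors of $(k-2)!$ from the sites $p\le k$ (governed by the $(\theta-1,1)$ representation, hence $\theta-2=k-2$) and the $n-k$ factors of $k!$ from the sites $p>k$ (governed by the $(\theta)$ representation, hence $\theta=k$), times the universal $4$ already seen in the base case.
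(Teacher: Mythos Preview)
Your proposal misreads the structure of the iterated transvectant in the statement. The innermost transvectant is $(f,f)^{110\ldots 0}$, applying $\Omega$ only at positions $1$ and $2$; subsequent transvections $(f,\cdot)^{0\ldots 01_j0\ldots 0}$ apply $\Omega$ at positions $j=3,\ldots,k$ one at a time. Your sentence ``the innermost transvectant $(f,f)^{1^k0^{n-k}}$ reproduces $d_{1^k0^{n-k}}$'' is doubly wrong: that is not the innermost transvectant, and $(f,f)^{1^k0^{n-k}}$ does not yield $d_{1^k0^{n-k}}$---for odd $k$ it vanishes identically by antisymmetry, and for even $k$ it gives the covariant $g_{1^k}$ underlying the $G$-family. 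Relatedly, the resulting covariant $\iota_{1^k0^{n-k}}$ still carries variables $x^{(1)},\ldots,x^{(k)}$, each of degree $k-2$ (see the paper's equation~(\ref{iota-k})), not only $x^{(k+1)},\ldots,x^{(n)}$ as you assume.

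Even after correcting the transvectant structure, your plan has a genuine gap at the inductive step. You propose to use Theorem~\ref{action} to identify coefficients, but that theorem describes how $\rho_i(g)d$ decomposes under the $SU(2)$ action, which is not the same operation as transvecting against a fresh copy of $f$. The paper's proof makes the induction work via a recurrence you do not mention: it shows directly that the transvection $(f,\cdot)^{0^{k-1}10^{n-k}}$ applied to $\iota_{1^{k-1}0^{n-k+1}}$ produces, in its $x_0$-term, the combination $a_{0^n}\cR_{k,1}d_{1^{k-1}0^{n-k+1}}-2a_{0^{k-1}10^{n-k}}d_{1^{k-1}0^{n-k+1}}$, and then identifies this with $d_{1^k0^{n-k}}$ via the log structure of the algebra (interpreting cumulant coefficients as formal derivatives of $\log\psi$). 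This identity, equation~(\ref{induction-k}), is the non-obvious bridge between the transvectant construction and the cumulant polynomials; without it there is nothing in your outline that forces the iterated transvectant to land on $d_{1^k0^{n-k}}$ rather than on some other degree-$k$ polynomial in the $a$'s. Once~(\ref{induction-k}) establishes the $x_0$-term, the extension to general $x$-monomials and the bookkeeping of the constant $\xi$ proceed along the lines you sketch.
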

\begin{proof}
  Consider first the terms in $\iota_{1^k 0^{n-k}}$ where the subscript
  in every $x$ is 0.The first transvectant step, $(f,f)^{110 \ldots
    0}$, yields terms 
\begin{align}\label{step1}
\sum_{i_3 \ldots i_n; j_3 \ldots j_n} \left(a_{11i_3 \ldots i_n}a_{00j_3 \ldots j_n}-a_{10i_3 \ldots i_n}a_{01j_3 \ldots j_n} \right) x^{(3)}_{i_3}x^{(3)}_{j_3} \ldots x^{(n)}_{i_n}x^{(n)}_{j_n}.
\end{align}
If $k=2$, the restriction to $x_0$'s means that we get
\[
\iota_{11(0^{n-2})}\vert_{x_0}=d_{11(0^{n-2})} \left(x_0^{(3)}\ldots x_0^{(n)}\right)^2.
\]
If $k>2$, at the next transvectant step we set the $x$'s in
(\ref{step1}) to $y$'s, multiply by the fundamental form $f$ and apply
$\Omega_3$ to get
\[
\Omega_3 \left[ \sum_{k_1 \ldots k_n} a_{k_1 \ldots k_n}x^{(1)}_{k_1} \dots x^{(n)}_{k_n} \right] \left[\sum_{i_3 \ldots i_n; j_1 \ldots j_n} \left(a_{11i_3 \ldots i_n}a_{00j_3 \ldots j_n}-a_{10i_3 \ldots i_n}a_{01j_3 \ldots j_n} \right) y^{(3)}_{i_3}y^{(3)}_{j_3} \ldots y^{(n)}_{i_n}y^{(n)}_{j_n}\right] \Bigm \vert_{y \to x}.
\]
If we are restricted to $x_0$'s, we must have $k_1=k_2=0$ since no
further $\Omega$ operations are applied in these index positions and
so these $x$'s will be unchanged. Only certain sets of indices are
consistent with a $y_0$ remaining after applying $\Omega_3$ to
$x^{(3)}_{k_{3}}y^{(3)}_{i_{3}}y^{(3)}_{j_{3}}$; namely
(1) $k_{3}=0$, $i_{3}=1$, $j_{3}=0$; (2) $k_{3}=0$,
$i_{3}=0$, $j_{3}=1$; (3) $k_{3}=1$, $i_{3}=0$,
$j_{3}=0$. The result of this operation is of the form
\[
\sum_{i,jk} (\alpha_{i,j,k}+\beta_{i,j,k})x_0^{(1)}x_0^{(2)}x_0^{(3)}x^{(4)}_{i_4}x^{(4)}_{j_4}x^{(4)}_{k_4} \ldots x^{(n)}_{i_{n}}x^{(n)}_{j_{n}}x^{(n)}_{k_{n}},
\]
where $\alpha_{i,jk}$, $\beta_{i,j,k}$ are terms in the $a$'s with
compound indices $i=\{i_4 \ldots i_n\}$, etc., and $\alpha_{i,jk}$
comes from the conditions (1) and (2) above on index sets:
\[
\alpha_{i,j,k}=a_{000k_4 \ldots k_n} \cR_{3,1}\left[a_{110 i_4 \ldots i_n}a_{000 j_4 \ldots j_n}-a_{010i_4 \ldots i_n}a_{100j_4 \ldots j_n}\right],
\]
whereas from condition (3) we get
\[
\beta_{i,j,k}=a_{001k_4 \ldots k_4}\left[a_{110i_4 \ldots i_n}a_{000j_4 \ldots j_n}-a_{010i_4 \ldots i_n}a_{100j_4 \ldots j_n}\right].
\]
If $k=3$ this simplifies to 
\begin{align}
\iota_{111(0^{n-3})}\vert_{x_0}=\left[ a_{000(0^{n-3})}\cR_{3,1}d_{110 (0^{n-3})} -2a_{001(0^{n-3})}d_{110(0^{n-3})}  \right]\left(x_0^{(1)}x_0^{(2)}x_0^{(3)}\right) \left(x_0^{(4)} \ldots x_0^{(n)}\right)^3.
\end{align}
Using (\ref{partitions}), a straightforward calculation shows that 
\begin{align}\label{induction3}
d_{111 (0^{n-3})}=a_{000(0^{n-3})}\cR_{3,1}d_{110 (0^{n-3})} -2a_{001 (0^{n-3})}d_{110 (0^{n-3})}.
\end{align}
Repeating the above argument, we have
\begin{align}\label{iota-k}
\iota_{1^k0^{n-k}}\vert_{x_0}=d_{1^k0^{n-k}} \left(x_0^{(1)} \ldots x_0^{(k)}\right)^{k-2} \left(x_0^{(k+1)} \ldots x_0^{(n)}\right)^{k},
\end{align}
and the generalisation of (\ref{induction3}) is
\begin{align}\label{induction-k}
d_{1^k0^{n-k}}=a_{0^n}\cR_{k,1}d_{1^{k-1}0^{n-k+1}} -2a_{0^{k-1}10^{n-k+1}}d_{1^{k-1}0^{n-k+1}}.
\end{align}
This last equation has a straightforward interpretation. When
evaluating $f(\psi)$ by the Taylor series (\ref{taylor}), the
coefficient of, say, $e_1$ can be obtained by differentiating
$\frac{\partial}{\partial e_1}f(\psi)$ and setting $e_i=0$, for all
$i$. Writing $\psi=a+r$, where $a=a_{0 \ldots 0}$, we find
\[
\frac{\partial}{\partial e_1}f(a+r)\vert_{e_i=0}=f^\prime(a+r)\frac{\partial}{\partial e_1}r\vert_{e_i=0}=f^\prime(a)a_{10\ldots 0}=f^\prime(a)\cR_{1,1}a.
\]
We can interpret the last expression above as the formal derivative of
$f(\psi)$ using the raising operator $\cR_{1,1}$, and similarly the
coefficient of any product $e_{i_1} \ldots e_{i_q}$ is the result of
formal derivatives by $\cR_{i_1,1} \ldots \cR_{i_q,1}$. This can
indeed be taken as the {\em definition} of the expansion of $f(\psi)$,
as in \cite{AbergMitchison09}. For the $\log$ function, the
coefficient of $e_1 \ldots e_{k-1}$ is $c_{1^{k-1}0^{n-k+1}}$, and
the coefficient of $e_1 \ldots e_k$, namely $c_{1^k0^{n-k}}$ is
obtained by applying $\cR_{k,1}$ to
$c_{1^{k-1}0^{n-k+1}}$. Differentiating $\log(\psi)$ and using
$c_{1^q0^{n-q}}=d_{1^q0^{n-q}}(a_{0^n})^{-q}$ gives (\ref{induction-k}).

From (\ref{iota-k}) and the definition of the inner product
(\ref{inner-product}) we find that 
\[
\langle \left(\iota_{1^k0^{n-k}}\vert_{x_0}\right)|
\left(\iota_{1^k0^{n-k}}\vert_{x_0} \right)\rangle=\xi|d_{1^k0^{n-k}}|^2,
\]
where $\xi$ is the constant given in the Proposition. With the
restriction to $x_0$'s we therefore get, up to the factor $\xi$, the
term in the formula for $I_{1^k0^{n-k}}$ (Theorem \ref{explicit})
where $k_p=0$ for all $p$. To complete the proof, one observes that,
allowing $k$ $x_1^{(i)}$'s introduces $k$ 1's into the $a$'s at position
$i$, and is equivalent to applying $\cR_{i_p,k_p}$ to
$d_{1^k0^{n-k}}$. The values of the coefficients $\alpha^{i_p}_{k_p}$
are given by the derivative inner product.

\end{proof}
This enables us to recognise some of the four-qubit invariants in
\cite{Luque07}. Up to a constant factor, we have the following
identifications:
\begin{align}
\label{1000} I_{1000} &\leftrightarrow A_{1111},\\
\label{1100} I_{1100} &\leftrightarrow \langle B_{0022}|B_{0022}\rangle,\\
\label{1110} I_{1110} &\leftrightarrow \langle C_{1113}|C_{1113}\rangle,
\end{align}
which, with permutations of indices in (\ref{1100}) and (\ref{1110}),
yields 11 corresponding pairs. Note that we use different letters for
the invariants, and our subscripts indicate the total number of 1's at
a given position in successive transvection operations; see Table
\ref{transvectants}.

We now come to the third way of defining the lift. Suppose that a
covariant $p_{l_1 \ldots l_n}$ is derived by some sequence of
transvectant operations. Define its $i$th lift $p_{l_1 \ldots 0_i
  \ldots l_n}$ by adding an index position in the $i$th position in
the ground form, and applying the same transvectant operations, but
with an `0' added to the transvectant indices in the $i$th position.
\begin{proposition}\label{cov}
  If $P_{l_1 \ldots l_n}=\langle p_{l_1 \ldots l_n} | p_{l_1 \ldots
    l_n} \rangle$ is the invariant derived from the covariant $p_{l_1
    \ldots l_n}$, then the $i$th lift of $P_{l_1 \ldots l_n}$ is given
  by $P_{l_1 \ldots 0_i \ldots l_n}=\langle p_{l_1 \ldots 0_i \ldots
    l_n} | p_{l_1 \ldots 0_i \ldots l_n} \rangle$.
\end{proposition}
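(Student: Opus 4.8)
The plan is to reduce the proposition to the same combinatorial identity that underlies the proofs of Theorems \ref{action} and \ref{explicit}. Write $p=p_{l_1\ldots l_n}$ and let $\delta$ be its degree in the $a$'s; being an iterated transvectant of the fundamental form, $p$ is a finite sum $\sum_M c_M\,M(a)\,X_M$ over monomials, where $M(a)$ is a product of $\delta$ coefficients and $X_M$ a monomial in the variables $x^{(m)}$. The first step is a structural lemma for the transvectant-lift, which is the $x$-variable analogue of Theorem \ref{action}: when an index position $i$ is adjoined to the ground form and a `0' is inserted into every transvectant index at that position, the operator $\Omega_i$ is never applied, so each of the $\delta$ copies of $f$ contributes exactly one uncancelled factor $x^{(i)}$. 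Using the multilinearity of iterated transvection in its $f$-slots together with the slice decomposition $f=x^{(i)}_0 f_0+x^{(i)}_1 f_1$ (with $f_0$ the original, un-lifted ground form), and grouping terms by the number $j$ of slots that receive the slice $f_1$, one gets
\[
p_{l_1\ldots 0_i\ldots l_n}=\sum_{j=0}^{\delta}(x^{(i)}_0)^{\delta-j}(x^{(i)}_1)^{j}\,\cR_{i,j}\big(p^{(0)}\big),
\]
where $p^{(0)}$ denotes $p$ with a `0' adjoined in position $i$ to every $a$, and $\cR_{i,j}$ acts on the $a$-content of each monomial with the surviving $x^{(m)}$, $m\ne i$, carried along inertly. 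The delicate point here is that grouping the slot-choices by the number of raised indices must reproduce $\cR_{i,j}$ with the correct multiplicities even when a monomial of $p$ arises from several distinct slot-assignments; this follows from the permutation-symmetry built into the definition of $\cR_{i,k}$.

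Granting the lemma, the proof is a matter of evaluating both sides and comparing. On one side, $\langle p_{l_1\ldots 0_i\ldots l_n}\,|\,p_{l_1\ldots 0_i\ldots l_n}\rangle$ is computed with the derivative inner product (\ref{inner-product}): the inner product factorises over positions, agrees with $\langle p|p\rangle$ at every $m\ne i$, and at position $i$ gives $\langle(x^{(i)}_0)^{\delta-j}(x^{(i)}_1)^{j}\,|\,(x^{(i)}_0)^{\delta-k}(x^{(i)}_1)^{k}\rangle=\delta_{jk}(\delta-j)!\,j!$, so the $j\ne k$ terms drop out and the result is $\sum_{j}(\delta-j)!\,j!\,\langle\cR_{i,j}(p^{(0)})\,|\,\cR_{i,j}(p^{(0)})\rangle'$, the prime marking the inner product taken only over the positions $m\ne i$. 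On the other side, the $i$th lift of $P_{l_1\ldots l_n}=\langle p|p\rangle$ is computed from its definition (\ref{twirl-lift}) — equivalently, by Proposition \ref{tracedef}, via tracing out: each monomial $M(a)\overline{N(a)}$ of $\langle p|p\rangle$ (degree $\delta$ in the $a$'s) lifts to $(\delta+1)\int_{SU(2)}\rho_i(g)[M_0\overline{N_0}]\,dg$, and expanding $\rho_i(g)M_0=\sum_j u^{\delta-j}v^{j}\cR_{i,j}(M_0)$ as in Theorem \ref{action} and integrating with the Schur identities (\ref{schur1})--(\ref{schur3}) — which kill the cross terms and give $\int|u|^{2(\delta-j)}|v|^{2j}\,dg=((\delta+1)\binom{\delta}{j})^{-1}$ — turns the lifted monomial into $\frac{1}{\delta!}\sum_j(\delta-j)!\,j!\,\cR_{i,j}(M_0)\overline{\cR_{i,j}(N_0)}$. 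Summing over monomials and using linearity of $\cR_{i,j}$ shows this equals $\langle p_{l_1\ldots 0_i\ldots l_n}\,|\,p_{l_1\ldots 0_i\ldots l_n}\rangle$, once the normalisation of the transvectant (hence of $P=\langle p|p\rangle$) is fixed consistently; the discrepancy between the $(\delta+1)$-weighting in (\ref{twirl-lift}) and the $p!q!$-weighting in (\ref{inner-product}) is a single constant assembled from factorials — the analogue of the factor $\xi$ made explicit in Theorem \ref{translate}.

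The step I expect to be the main obstacle is precisely this reconciliation of combinatorial normalisations: matching the factorials $(\delta-j)!\,j!$ coming from the derivative inner product against the $(\delta+1)$ prefactor and the binomials $\binom{\delta}{j}$ coming from the twirl-lift, so that the two expressions agree term by term. This is routine but bookkeeping-heavy, and is essentially the calculation already carried out for specific transvectant index strings in the proofs of Theorems \ref{explicit} and \ref{translate}. A secondary technical point is making the structural lemma rigorous — that iterated transvection is genuinely multilinear in its ground-form arguments, and that the substitution $f\mapsto x^{(i)}_0 f_0+x^{(i)}_1 f_1$ commutes with the transvectant operations once those no longer involve position $i$. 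As a consistency check, one may note that both sides of the claimed identity are invariants: the left because lifting, being an average over $SU(2)$ at the new position, preserves invariance, and the right because $\langle q|q\rangle$ is an invariant whenever $q$ is a covariant.
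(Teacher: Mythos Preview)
Your proof is correct and follows essentially the same approach as the paper's: the key observation in both is that since $\Omega_i$ is never applied in the lifted transvectant, the variables $x^{(i)}$ accumulate freely and sort the terms of the lifted covariant according to the number of 1's inserted at position $i$. The paper's proof is a two-sentence sketch that states only this structural observation and implicitly defers the matching of normalisations to the calculations already carried out in Proposition~\ref{tracedef}; your version is more self-contained, spelling out the $\cR_{i,j}$-decomposition, redoing the Schur-lemma bookkeeping explicitly, and correctly flagging the overall factorial constant.
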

\begin{proof}
  Because there is a 0 at position $i$ in the transvectant indices,
  $\Omega_i$ is never applied during the transvection operations. This
  means that we get all possible products of $x_0^{(i)}$ and
  $x_1^{(i)}$, and the terms with $k$ $x_1^{(i)}$'s correspond to
  products of $a$'s with $k$ 1's in index position 1.
\end{proof}

As an example, consider the invariant of highest degree for 3-qubits
in \cite{Luque07}. It is given (in our notation) by $H_{222}=\langle
h_{222}|h_{222} \rangle$, where
\begin{align}\label{h}
h_{222}=(f,(f,(f,f)^{110})^{001})^{111}.
\end{align}
Equivalently, $H_{222}=|$Det$(\ket{\psi})|^2$, where
$\mbox{Det}(\ket{\psi})$ is the
hyperdeterminant (\ref{Det}). From (\ref{h}), the
lift of $h_{222}$ at position 4 is
$h_{2220}=(f,(f,(f,f)^{1100})^{0010})^{1110}$, and therefore by
Proposition \ref{cov} $H_{2220}=\langle h_{2220}|h_{2220} \rangle$ is
the lift at position 4 of $H_{222}$. In the terminology of
\cite{Luque07}, $H_{222}$ is $D_{000}$ and $H_{2220}$ is $\langle
D_{0004}|D_{0004} \rangle$.

Another example is obtained by putting $g_{1111}=(f,f)^{1111}$ and
setting $G_{1111}=\langle g_{1111}|g_{1111} \rangle$. This 4th degree
invariant can be written
\[
G_{1111}=a_{0000}a_{1111}-\left(a_{1000}a_{0111}+\mbox{ permutations }\right) +\left(a_{1100}a_{0011}+\mbox{ permutations }\right).
\]
In general, for each $k$, we add a new $2k$-party invariant
$G_{1^{2k}}=\langle g_{1^{2k}}|g_{1^{2k}} \rangle$, where
$g_{1^{2k}}=(f,f)^{1^{2k}}$. Together with all its lifts, the $G$
family comprises $\binom{n}{2}+\binom{n}{4}+\binom{n}{6} +\ldots
=2^{n-1}-1$ independent invariants of degree 4 for an $n$-qubit
system.  This coincides with the family $B_d$ in \cite{Luque07}.

\section{Conclusions}

We have seen that local unitary invariants come in families, related
by a tracing operation (Proposition \ref{tracedef}). An invariant for
$n$-qubit states can be 'lifted' to give invariants of $n+1$-qubit
states; when this process is repeated, one gets an exponentially large
family of invariants.

One important family is derived from twirled cumulants. These can be
shown to be algebraically independent within each $n$-qubit system,
and asymptotically the total number of such invariants is half the
dimension of the orbit space, $D(n)$. For three and four qubits, these
invariants are in fact already known \cite{Luque07} -- see (\ref{100})
to (\ref{111}), and (\ref{1000}) to (\ref{1110}) -- but their
connection to cumulants seems not to be recognised, nor their
relationships to each other through lifting operations. Another
example is the hyperdeterminant family: for $n=3$, $D_{000}$ in the
notation in \cite{Luque07} is the 3-tangle \cite{Coffman00}, and
$\langle D_{0004}|D_{0004} \rangle$ and its permutations are lifted
3-tangles.

Many of the invariants are closely related to separability of
states. The hyperdeterminant, in its guise as the 3-tangle, is an
entanglement measure for mixed states. The vanishing of members of the
cumulant family can be used to characterise multipartite separability
of pure states (see Theorem \ref{separability-criterion}). We can also
ask which states maximise these invariants. For instance, $I_{11}$
attains its maximum for a Bell state, and $I_{111}$ for a
GHZ. $I_{110}$ is maximised by $\Psi \otimes \ket{0}$, with $\Psi$ a
Bell state.  We can regard this as an example of monogamy of
entanglement \cite{monogamy,Coffman00}, with $I_{110}$ detecting
entanglement between the first two systems, and achieving its maximum
when they are unentangled with the third system. Similarly, one might
conjecture that $I_{1^k0}$ is maximised by states of the form
$\ket{\mu} \otimes \ket{0}$, where the $k$-qubit state $\ket{\mu}$
maximises $I_{1^k}$.

How far can the ideas here can be generalised beyond pure qubit
states? The cumulant-based invariants can be applied to mixed states
via the map (\ref{JtohatJ}). However, Theorem \ref{separability} tells
us only about correlation rather than mixed-state separability. The
results also fail to generalise for pure states where the local
dimension exceeds two. We can construct invariants, and Theorem
\ref{separability} holds, but the invariants are not algebraically
independent. This is seen even for two qutrits, where we have four
members of the cumulant family, namely $I_{11}$, $I_{12}$, $I_{21}$
and $I_{22}$, whereas there are only two independent invariants
\cite{Gu09}. Since four polynomial equations is the correct number to
characterise separability, the simple relationship between invariants
and separability cannot hold for $d>2$. Nonetheless the basic concept
of lifts and families still applies in all these wider contexts.

\section{Acknowledgements}

I thank Johan {\AA}berg and Tony Sudbery for comments on the emerging
manuscript, Graeme Segal for help with formulating the algebra of
multi-partite states, and Markus Grassl for many helpful comments and
for pointing out some egregious blunders in my exposition of invariant
theory in the previous version of this paper.

\section{APPENDIX}

\subsection{An alternative cumulant-based invariant}\label{zhou-section}

There is a very different way of relating cumulants and invariants,
due to Zhou et al. \cite{Zhouetal06}. Given an $n$-party mixed state
$\rho$, one defines its cumulant by analogy with (\ref{partitions}) as
\begin{align}\label{zhou}
\rho_c=\sum_\pi (-1)^{|\pi|-1}(|\pi|-1)! \bigotimes_{i=1}^{|\pi|}\rho_{\pi_i},
\end{align}
where $\rho_{\pi_i}$ is the result of tracing out from $\rho$ all
systems apart from those with labels in $\pi_i$. For instance, for three
systems
\[
\rho_c=\rho-(\rho_1\otimes \rho_{23}+\rho_2 \otimes \rho_{13}+\rho_3
\otimes \rho_{12})+2\rho_1 \otimes \rho_2\otimes \rho_3.
\]
The cumulant operator given by (\ref{zhou}) is not in general a state,
but Zhou et al. propose $M(\rho)=\frac{1}{2}\tr|\rho_c|$ as a measure
of correlation of the mixed state $\rho$. It is manifestly invariant
under local unitaries, and, because of the general property cumulants
have of vanishing on products, $M(\rho)=0$ whenever $\rho=\rho_S
\otimes \rho_T$. For pure states, this means it vanishes when states
are separable.

It therefore seems to have formal similarities to our cumulant-based
invariants, and one can carry this further by defining, in line with
Proposition \ref{tracedef}, the lift of $M$ to be $M(\tr
\proj{\psi})$. We can in fact adopt parallel notation to the $I$'s,
writing, for a 3-qubit state for example,
$M_{111}(\ket{\psi})=\frac{1}{2}\tr|\left(\proj{\psi}\right)_c|$,
$M_{110}(\ket{\psi})=\frac{1}{2}\tr|\left(\tr_3\proj{\psi}\right)_c|$,
and so on. Then $M_{i_1 \ldots i_n}(\ket{\psi})=0$ for any
$\pi$-separable $\ket{\psi}$ where $\{i_1 \ldots i_n\}$ splits
$\pi$. Furthermore, for 3-qubit states $M_{111}(\ket{\psi})=0$ is
sufficient for separability of $\ket{\psi}$ (\cite{Zhouetal06},
Theorem 3), and the same is true if $I_{111}(\ket{\psi})=0$.

These similarities prompt the question of whether there is a
functional connection. Can one write $M_{i_1 \ldots
  i_n}(\ket{\psi})=F(I_{i_1 \ldots i_n}(\ket{\psi}))$, for some
function $F$? For 2-qubit states,
$M_{11}=I_{11}+\sqrt{I_{11}}$. However, there is only one 2-qubit
invariant for normalised states, so a functional relationship here is
unsurprising. For 3-qubit states of the form
$\ket{\Psi}=a\ket{000}+b\ket{111}$ one finds
\begin{align}\label{f1}
  M_{111}=6I_{111}\sqrt{1-4I_{111}}+2\sqrt{I_{111}+I_{111}^2-4I_{111}^3},
\end{align}
whereas, for states of the form
$\ket{\phi}=a\ket{100}+b\ket{010}+c\ket{001}$
\begin{align}\label{f2}
(M_{111}-\frac{I_{111}}{2})^3-\frac{1}{4}I_{111}=0.
\end{align}
Since (\ref{f1}) and (\ref{f2}) do not define the same function of
$I_{111}$, $M_{111}$ must depend on other invariants besides
$I_{111}$.

\end{document}